\providecommand{\bigo}{\mathcal{O}}
\providecommand{\etal}{{\em et al.}}
\providecommand{\Runs}{\mathtt{Runs}}
\providecommand{\SRuns}{\mathtt{SRuns}}
\providecommand{\HRuns}{\mathtt{HRuns}}
\providecommand{\nRuns}{\rho}
\providecommand{\nSRuns}{\tau}
\providecommand{\SUS}{\mathtt{SUS}}
\providecommand{\nSUS}{{\ensuremath \sigma}}
\begin{document}

\title[Compressed Representations of Permutations]{Compressed
  Representations of Permutations, \\ and Applications}

\author{J. Barbay}{J\'er\'emy Barbay}
\author{G. Navarro}{Gonzalo Navarro}
\address{Dept. of Computer Science (DCC), University of Chile.}
\email{{jbarbay,gnavarro}@dcc.uchile.cl}  

\thanks{Second author partially funded by Fondecyt Grant 1-080019, Chile.}

\keywords{Compression, Permutations, Succinct Data
    Structures, Adaptive Sorting.}

\begin{abstract}
  We explore various techniques to compress a permutation $\pi$ over
  $n$ integers, taking advantage of ordered subsequences in $\pi$,
  while supporting its application $\pi(i)$ and the application of its
  inverse $\pi^{-1}(i)$ in small time.
  Our compression schemes yield several interesting byproducts, in many
  cases matching, improving or extending the best existing results on 
  applications such as the encoding
  of a permutation in order to support iterated applications $\pi^{k}(i)$ of it,
  of integer functions, and  of inverted lists and suffix arrays.
\end{abstract}

\maketitle

\section{Introduction}
\label{sec:introduction}

Permutations of the integers $[n] = \{1,\ldots,n\}$ are a basic
building block for the succinct encoding of integer
functions~\cite{representingFunctions},
strings~\cite{ANS06,rankSelectOperationsOnLargeAlphabets,NM07,Sad03},
and binary
relations~\cite{succinctIndexesForStringsBinaryRelationsAndMultiLabeledTrees,adaptiveSearchingInSuccinctlyEncodedBinaryRelationsAndTreeStructuredDocumentsTCS},
among others.
A permutation $\pi$ is trivially representable in $n\lceil\lg n\rceil$
bits, which is within $\bigo(n)$ bits of the information theory lower bound 
of $\lg(n!)$ bits.\footnote{In this paper we use the
notations $\lg x = \log_2 x$ and $[x]=\{1,\ldots,x\}$.}
In many interesting applications, efficient computation of both the 
permutation $\pi(i)$ and its inverse $\pi^{-1}(i)$ is required.

The lower bound of $\lg(n!)$ bits yields a lower bound of $\Omega(n\log
n)$ comparisons to sort such a permutation in the comparison model.
Yet, a large body of research has been dedicated to finding better
sorting algorithms which can take advantage of specificities of each
permutation to sort.
Trivial examples are permutations sorted such as the identity, or
containing sorted
blocks~\cite{measuresOfPresortednessAndOptimalSortingAlgorithms}
(e.g. $(\mathit{1},\mathit{3},\mathit{5},\mathit{7},\mathit{9},\mathbf{2},\mathbf{4},\mathbf{6},\mathbf{8},\mathbf{10})$
or
$(\mathit{6},\mathit{7},\mathit{8},\mathit{9},\mathit{10},\mathbf{1},\mathbf{2},\mathbf{3},\mathbf{4},\mathbf{5})$),
or containing sorted
subsequences~\cite{sortingShuffledMonotoneSequences}
(e.g. $(\mathit{1},\mathbf{6},\mathit{2},\mathbf{7},\mathit{3},\mathbf{8},\mathit{4},\mathbf{9},\mathit{5},\mathbf{10})$):
algorithms performing only $\bigo(n)$ comparisons on such
permutations, yet still $\bigo(n\log n)$ comparisons in the worst
case, are achievable and obviously preferable.
Less trivial examples are classes of permutations whose structure
makes them interesting for applications: see Mannila's seminal
paper~\cite{measuresOfPresortednessAndOptimalSortingAlgorithms} and
Estivil-Castro and Wood's review~\cite{estivillcastro92survey} for
more details. 

Each sorting algorithm in the comparison model yields an encoding
scheme for permutations: It suffices to note the result of each
comparison performed to uniquely identify the permutation sorted, and
hence to encode it.
Since an adaptive sorting algorithm performs $o(n\log n)$ comparisons
on many classes of permutations, each adaptive algorithm yields a {\em
  compression scheme} for permutations, at the cost of losing a
constant factor on some other ``bad'' classes of permutations.
We show in Section~\ref{sec:applications} some examples of applications
where only ``easy'' permutations arise.
Yet such compression schemes do not necessarily support in reasonable
time the inverse of the permutation, or even the simple application of
the permutation: this is the topic of our study.
We describe several encodings of permutations so that
on interesting classes of instances the encoding uses $o(n\log n)$ bits while
supporting the operations $\pi(i)$ and $\pi^{-1}(i)$ in time $o(\log n)$.
Later, we apply our compression schemes to
various scenarios, such as the encoding of integer functions, text
indexes, and others, yielding original compression schemes for these
abstract data types.

\section{Previous Work}
\label{sec:previous-work}

\begin{definition}
  The {\em entropy} of a sequence of positive integers $X=\langle n_1,n_2,
  \ldots, n_r\rangle$ adding up to $n$ is $H(X) =
  \sum_{i=1}^r\frac{n_i}{n} \lg \frac{n}{n_i}$.
  By convexity of the logarithm,
    $\frac{r\lg n}{n} \le H(X) \le \lg r$.
\label{def:entrop}
\end{definition}

\paragraph{\bf Succinct Encodings of Sequences}
\label{sec:sequences}

Let $S[1,n]$ be a sequence over an alphabet $[r]$. This includes 
bitmaps when $r=2$ (where, for convenience, the alphabet will be 
$\{0,1\}$). We will make use of succinct representations of $S$
that support operations $rank$ and $select$: 
  $rank_c(S,i)$ gives the number of occurrences of $c$ in $S[1,i]$ and
  $select_c(S,j)$ gives the position in $S$ of the $j$th occurrence of $c$.

For the case $r=2$, $S$ requires $n$ bits of space and $rank$ and $select$ can
be supported in constant time using $\bigo(\frac{n\log\log n}{\log n}) = o(n)$ 
bits on top of $S$ \cite{Mun96,Cla96,Gol06}. 
The extra space is more 
precisely $\bigo(\frac{n\log b}{b} + 2^b\,\textrm{polylog}(b))$ for some
parameter $b$, which is chosen to be, say, $b=\frac{1}{2}\lg n$ to achieve
the given bounds. In this paper, we will sometimes apply the technique over 
sequences of length $\ell =o(n)$ ($n$ will be the length of the permutations). 
Still, we will maintain the value of $b$ as a function of $n$, not 
$\ell$, which ensures that the extra space will be of the form 
$\bigo(\frac{\ell\log\log n}{\log n})$, i.e., it will tend to zero when
divided by $\ell$ as $n$ grows, even if $\ell$ stays constant.
All of our $o()$ terms involving several variables in this paper can be
interpreted in this strong sense: asymptotic in $n$.
Thus we will write the above space simply as $o(\ell)$.

Raman \etal~\cite{RRR02} devised a bitmap representation that takes
$nH_0(S) + o(n)$ bits, while maintaining the constant time for the operations.
Here $H_0(S) = H(\langle n_1,n_2,\ldots,n_r\rangle) \le \lg r$, where $n_c$ is 
the number of occurrences of symbol $c$ in $S$, is the so-called {\em 
zero-order entropy} of $S$. 
For the binary case this simplifies to
$nH_0(S) = m\lg\frac{n}{m} + (n-m)\lg\frac{n}{n-m} = m\lg\frac{n}{m}+\bigo(m)$, 
where $m$ is the number of bits set in $S$.

Grossi \etal~\cite{GGV03} extended the result to larger alphabets using
the so-called {\em wavelet tree}, which decomposes a sequence into several
bitmaps. By representing those bitmaps in plain form, one can represent $S$
using $n\lceil\lg r\rceil(1+o(1))$ bits of space, and answer $S[i]$, as well
as $rank$ and $select$ queries on $S$, in time $\bigo(\log r)$. By, instead,
using Raman \etal's representation for the bitmaps, one achieves
$nH_0(S) + o(n\log r)$ bits of space, and the same times.
Ferragina \etal~\cite{FMMN07} used multiary wavelet trees to maintain the
same compressed space, while improving the times for all the operations to
$\bigo(1+\frac{\log r}{\log\log n})$.

\paragraph{\bf Measures of Disorder in Permutations}
\label{sec:meas-disord-perm}

Various previous studies on the presortedness in
sorting considered in particular the following measures of order on
an input array to be sorted.
Among others, 
Mehlhorn~\cite{mehlhorn} and Guibas~\etal~\cite{guibas} considered the
number of pairs in the wrong order,
Knuth~\cite{theArtOfComputerProgrammingVol3} considered the number of
ascending substrings (runs),
Cook and Kim~\cite{cook}, and later
Mannila~\cite{measuresOfPresortednessAndOptimalSortingAlgorithms}
considered the number of elements which have to be removed to leave a
sorted list,
Mannila~\cite{measuresOfPresortednessAndOptimalSortingAlgorithms}
considered the smallest number of exchanges of arbitrary elements
needed to bring the input into ascending order,
Skiena~\cite{encroachingListsAsAMeasureOfPresortedness} considered the
number of encroaching sequences, obtained by distributing the input
elements into sorted sequences built by additions to both ends,
and Levcopoulos and Petersson~\cite{sortingShuffledMonotoneSequences}
considered Shuffled UpSequences and Shuffled Monotone Sequences.
Estivil-Castro and Wood~\cite{estivillcastro92survey} list them all
and some others.

\section{Compression Techniques}
\label{sec:compr-techn}

We first introduce a compression method that takes advantage of (ascending)
runs in the permutation. Then we consider a stricter variant of the runs,
which allows for further compression in applications when those runs arise,
and in particular allows the representation size to be sublinear in $n$.
Next, we consider a more general type of runs, which need not be
contiguous.

\subsection{Wavelet Tree on Runs}
\label{sec:wavelet-tree-runs}

One of the best known sorting algorithm is merge sort, based on a
simple linear procedure to merge two already sorted arrays, resulting
in a worst case complexity of $\bigo(n\log n)$.
Yet, checking in linear time for \emph{down-step} positions in the
array, where an element is followed by a smaller one, partitions the
original arrays into ascending runs which are already sorted. 
This can speed up the algorithm when the array is partially
sorted~\cite{theArtOfComputerProgrammingVol3}.
We use this same observation to encode permutations.

\begin{definition}
  A \emph{down step} of a permutation $\pi$ over $[n]$ is a position $i$ 
  such that $\pi(i+1)<\pi(i)$.  
  A \emph{run} in a permutation $\pi$ is a maximal range of
  consecutive positions $\{i,\ldots,j\}$ which does not contain any
  down step.
  Let $d_1,d_2, \ldots,d_k$ be the list of consecutive down steps in $\pi$.
  Then the number of runs of $\pi$ is noted $\nRuns = k+1$, 
  and the sequence of the lengths of the runs is noted
  $\Runs = \langle d_1,d_2-d_1, \ldots,d_k-d_{k-1},n+1-d_k \rangle$.
\end{definition}

For example, permutation $(\mathit{1},\mathit{3},\mathit{5},\mathit{7},\mathit{9},\mathbf{2},\mathbf{4},\mathbf{6},\mathbf{8},\mathbf{10})$ contains 
$\nRuns=2$ runs, of lengths $\langle 5,5 \rangle$.
Whereas previous
analyses~\cite{measuresOfPresortednessAndOptimalSortingAlgorithms} of
adaptive sorting algorithms considered only the number $\nRuns$ of runs, we
refine them to consider the distribution $\Runs$ of the sizes of the
runs.

\begin{theorem}
  There is an encoding scheme using at most $n(2+H(\Runs))(1+o(1)) +
  \bigo(\nRuns\log n)$ bits to encode a permutation $\pi$ over $[n]$
  covered by $\nRuns$ runs of lengths $\Runs$.
  It supports $\pi(i)$ and $\pi^{-1}(i)$ in time $\bigo(1+\log\nRuns)$
  for any value of $i\in[n]$.
  If $i$ is chosen uniformly at random in $[n]$ then the average time
  is $\bigo(1+H(\Runs))$.
\label{thm:main}
\end{theorem}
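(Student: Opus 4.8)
The plan is to reduce the representation of $\pi$ to a single string over the alphabet $[\nRuns]$ supporting $access$, $rank$ and $select$, realised as a binary wavelet tree whose shape is chosen so that its size is about $nH(\Runs)$ while its depth stays $\bigo(\log\nRuns)$.

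Let the $\nRuns$ runs of $\pi$ occupy the consecutive intervals $I_1,\dots,I_{\nRuns}$, where $I_j$ has length $n_j$ (so $\Runs=\langle n_1,\dots,n_{\nRuns}\rangle$) and starts at position $s_j$, and recall that $\pi$ is strictly increasing on each $I_j$. Store a bitmap $B[1,n]$ with a $1$ exactly at each $s_j$; with Raman \etal's representation this costs $nH_0(B)+o(n)=\bigo(\nRuns\log n)+o(n)$ bits and yields, in $\bigo(1)$ time, both $\mathrm{run}(i):=rank_1(B,i)$, the index of the run containing position $i$, and $s_j=select_1(B,j)$. Now let $S[1,n]$ be the string over $[\nRuns]$ with $S[v]$ equal to the index of the run containing $\pi^{-1}(v)$. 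Since $\pi$ is increasing on each run, the $n_j$ values lying in run $j$ occur in $S$, read left to right, in increasing order; hence the $k$-th of them is $select_j(S,k)$ and the number of them not exceeding $v$ is $rank_j(S,v)$. Consequently, writing $j=\mathrm{run}(i)$ gives $\pi(i)=select_j(S,\,i-s_j+1)$, and writing $j=S[v]$ gives $\pi^{-1}(v)=s_j+rank_j(S,v)-1$. Everything now reduces to $access$, $rank$ and $select$ on $S$.

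I would represent $S$ by a binary wavelet tree with $\nRuns$ leaves, placing the leaf of run $j$ at depth $\ell_j$. Using plain bitmaps with constant-time $rank$/$select$ (and, as stressed in the introduction, keeping their block length a function of $n$ so that the redundancy is genuinely $o(\cdot)$), the tree occupies $\big(\sum_j n_j\ell_j\big)(1+o(1))$ bits; a query involving run $j$ traverses one root-to-leaf path and costs $\bigo(1+\ell_j)$, hence $\bigo(1+\log\nRuns)$ provided $\max_j\ell_j=\bigo(\log\nRuns)$; and since a uniform $i$ (resp.\ $v$) lies in run $j$ with probability $n_j/n$, the expected cost over such $i$ (resp.\ $v$) is $\bigo\big(1+\frac1n\sum_j n_j\ell_j\big)$. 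So it suffices to choose the $\ell_j$ with $\sum_j n_j\ell_j\le n\big(2+H(\Runs)\big)$ and $\max_j\ell_j=\bigo(\log\nRuns)$. Take $\ell_j=\min\{\lceil\lg(n/n_j)\rceil+1,\ \lceil\lg\nRuns\rceil+1\}$. Then $\sum_j 2^{-\ell_j}\le\sum_j\frac{n_j}{2n}+\nRuns\cdot\frac{1}{2\nRuns}=1$, so Kraft's inequality holds and a binary tree with exactly these leaf depths exists; its depth is $\lceil\lg\nRuns\rceil+1=\bigo(\log\nRuns)$; and in both branches of the minimum $\ell_j\le\lg(n/n_j)+2$, whence $\sum_j n_j\ell_j\le\sum_j n_j\lg(n/n_j)+2\sum_j n_j=n\big(H(\Runs)+2\big)$. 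Adding the $\bigo(\nRuns\log n)$ bits for $B$, for the tree topology, and for the map from leaves to run indices yields the claimed space, worst-case time, and average time.

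The only genuinely delicate point is this choice of tree shape. A Huffman tree on $\langle n_1,\dots,n_{\nRuns}\rangle$ has weighted external path length below $n(1+H(\Runs))$ but can have depth $\nRuns-1$, destroying the worst-case time; a perfectly balanced tree has depth $\lceil\lg\nRuns\rceil$ but weighted path length up to $n\lg\nRuns$, destroying the space. The length-capped code above reconciles the two, and the observation that makes capping cost nothing is that a run whose length is capped is so short that $\lceil\lg\nRuns\rceil+1$ is already $\le\lg(n/n_j)+2$. The remaining ingredients — correctness of the two identities for $\pi$ and $\pi^{-1}$, and the bookkeeping of the $o(\cdot)$ terms — are routine.
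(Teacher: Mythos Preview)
Your argument is correct and achieves the stated bounds, but the route differs from the paper's in two notable ways.

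First, to obtain a tree with weighted external path length at most $n(2+H(\Runs))$ \emph{and} depth $\bigo(\log\nRuns)$, the paper builds a Hu--Tucker (optimal alphabetic) tree on the run lengths and then \emph{post-hoc rebalances} every subtree hanging below depth $4\lg\nRuns$; the cost of this rebalancing is controlled by an exchange argument (AVL-style rotations on the optimal tree) showing that any node at depth $\ell$ carries weight at most $n/2^{\lfloor\ell/2\rfloor}$, so the total weight below depth $4\lg\nRuns$ is $\bigo(n/\nRuns)$ and the extra bits are $o(n)$. Your explicit length-capped code $\ell_j=\min\{\lceil\lg(n/n_j)\rceil+1,\lceil\lg\nRuns\rceil+1\}$ replaces both the Hu--Tucker construction and this rebalancing analysis with a single Kraft computation, which is decidedly more elementary.

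Second, because Hu--Tucker yields leaves in the same left-to-right order as the runs, the paper accumulates the starting position of the current block \emph{during} the root-to-leaf descent (the variable $p$), and thus needs no auxiliary structure to locate run boundaries. Your tree need not be alphabetic, so you pay for the extra bitmap $B$; this costs $\bigo(\nRuns\log n)+o(n)$ bits, already absorbed in the stated space. In short, the paper's version is slightly more self-contained (one structure rather than two), while yours is more transparent and avoids the somewhat delicate rotation/rebalancing argument.
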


\begin{proof}
The Hu-Tucker algorithm \cite{HT71} (see also Knuth
\cite[p.~446]{theArtOfComputerProgrammingVol3}) produces in
$\bigo(\nRuns\log\nRuns)$ time a prefix-free code from a sequence of 
frequencies $X=\langle n_1,n_2, \ldots,n_\nRuns\rangle$ adding up to $n$, 
so that (1) the $i$-th lexicographically smallest code is that for frequency 
$n_i$, and (2) if $\ell_i$ is the bit length of the code assigned to the 
$i$-th sequence element, then $L=\sum \ell_i n_i$ is minimal and moreover
$L < n(2+H(X))$ \cite[p.~446, Eq.~(27)]{theArtOfComputerProgrammingVol3}.

We first determine $\Runs$ in $\bigo(n)$ time, and then apply the Hu-Tucker 
algorithm to $\Runs$. We arrange the set of codes produced in a binary
trie (equivalent to a Huffman tree \cite{Huf52}), where each leaf corresponds
to a run and points to its two endpoints in $\pi$. Because of property (1), 
reading the leaves left-to-right yields the runs also in left-to-right order. 
Now we convert this trie into a wavelet-tree-like structure \cite{GGV03} without altering its 
shape, as follows. Starting from the root, first process recursively each 
child. For the leaves do nothing. Once both children of an internal node have
been processed, the invariant is that they point to the contiguous area in 
$\pi$ covering all their leaves, and that this area of $\pi$ has already been 
sorted. Now we merge the areas of the two children in time proportional to the 
new area created (which, again, is contiguous in $\pi$ because of property (1)).
As we do the merging, each time we take an element from the left child we 
append a 0 bit to a bitmap we create for the node, and a 1 bit when we take an 
element from the right list.

When we finish, we have the following facts: (1) $\pi$ has been sorted,
(2) the time for sorting has been $\bigo(n+\nRuns\log\nRuns)$ plus the total 
number of bits appended to all bitmaps, (3) each of the $n_i$ elements of leaf 
$i$ (at depth $\ell_i$) has been merged $\ell_i$ times, contributing $\ell_i$ 
bits to the bitmaps of its ancestors, and thus the total number of bits is 
$\sum n_i \ell_i$.

Therefore, the total number of bits in the Hu-Tucker-shaped wavelet tree is at 
most $n(2+H(\Runs))$. To this we must add the $\bigo(\nRuns\log n)$ bits of the
tree pointers. We preprocess all the bitmaps for $rank$ and $select$ queries
so as to spend $o(n(2+H(\Runs))$ extra bits (\S\ref{sec:sequences}). 

To compute $\pi^{-1}(i)$ we start at offset $i$ at the root bitmap $B$, with
position $p \leftarrow 0$, and bitmap size $s \leftarrow n$. If $B[i] = 0$ we 
go down to the left child with $i \leftarrow rank_0(B,i)$ and $s \leftarrow
rank_0(B,s)$. Otherwise we go down to the right child with 
$i \leftarrow rank_1(B,i)$, $p \leftarrow p + rank_0(B,s)$, and $s \leftarrow
rank_1(B,s)$. When we reach a leaf, the answer is $p+i$.

To compute $\pi(i)$ we do the reverse process, but we must first determine the 
leaf $v$ and offset $j$ within $v$ corresponding to position $i$: We start at 
the root bitmap $B$, with bitmap size $s \leftarrow n$ and position 
$j \leftarrow i$. If $rank_0(B,s) \ge j$ we go down to the left child with 
$s \leftarrow rank_0(B,s)$. Otherwise we go down to the right child with 
$j \leftarrow j-rank_0(B,s)$ and $s \leftarrow rank_1(B,s)$. We eventually reach
leaf $v$, and the offset within $v$ is $j$. We now start an upward traversal
using the nodes that are already in the recursion stack (those will be limited
to $\bigo(\log\nRuns)$ soon). If $v$ is a left child of its parent $u$, then we 
set $j \leftarrow select_0(B,j)$, else we set $j \leftarrow select_1(B,j)$, 
where $B$ is the bitmap of $u$. Then we set $v \leftarrow u$ until reaching 
the root, where $j = \pi(i)$.

In both cases the time is $\bigo(\ell)$, where $\ell$ is the depth of the leaf 
arrived at.
If $i$ is chosen uniformly at random in $[n]$, then the average
cost is $\frac{1}{n}\sum n_i\ell_i = \bigo(1+H(\Runs))$. However, the worst case
can be $\bigo(\nRuns)$ in a fully skewed tree. We can ensure $\ell = 
\bigo(\log \nRuns)$ in the worst case while maintaining the average case by 
slightly rebalancing the Hu-Tucker tree: If there exist nodes at depth 
$\ell=4\lg\nRuns$, we rebalance their subtrees, so as to guarantee maximum 
depth $5\lg\nRuns$. This affects only marginally the size of the structure. 
A node at depth $\ell$ cannot add up to a frequency higher than 
$n/2^{\lfloor \ell/2 \rfloor} \le 2n/\nRuns^2$ (see next paragraph). Added
over all the possible $\nRuns$ nodes we have a total frequency of $2n/\nRuns$.
Therefore, by rebalancing those subtrees we add at 
most $\frac{2n\lg\nRuns}{\nRuns}$ bits. This is $o(n)$ if $\nRuns=\omega(1)$,
and otherwise the cost was $\bigo(\nRuns)=\bigo(1)$ anyway. For the same 
reasons the average time stays $\bigo(1+H(\Runs))$ as it increases at most
by $\bigo(\frac{\log\nRuns}{\nRuns}) = \bigo(1)$.

The bound on the frequency at depth $\ell$ is proved as follows. Consider the 
node $v$ at depth $\ell$, and its grandparent $u$. Then the uncle of $v$ 
cannot have smaller frequency than $v$. 
Otherwise we could improve the already
optimal Hu-Tucker tree by executing either a single (if $v$ is left-left or 
right-right grandchild of $u$) or double (if $v$ is left-right or right-left 
grandchild of $u$) AVL-like rotation that decreases the depth of $v$ by 1 and 
increases that of the uncle of $v$ by 1. Thus the overall frequency at 
least doubles whenever we go up two nodes from $v$, and this holds recursively.
Thus the weight of $v$ is at most $n/2^{\lfloor \ell/2 \rfloor}$. 
\end{proof}

The general result of the theorem can be simplified when the distribution 
$\Runs$ is not particularly favorable.

\begin{corollary}
  There is an encoding scheme using at most
  $n\lceil\lg\nRuns\rceil(1+o(1)) +\bigo(\log n)$ bits to encode a
  permutation $\pi$ over $[n]$ with a set of $\nRuns$ runs.
  It supports $\pi(i)$ and $\pi^{-1}(i)$ in time $\bigo(1+\log\nRuns)$
  for any value of $i\in[n]$.
\label{cor:mainbal}
\end{corollary}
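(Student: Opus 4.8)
The plan is to derive the corollary as a simple specialization of Theorem~\ref{thm:main}, trading the fine-grained entropy bound $H(\Runs)$ for the coarser worst-case bound $H(\Runs)\le\lg\nRuns$ from Definition~\ref{def:entrop}. First I would invoke the construction of Theorem~\ref{thm:main} verbatim, which already produces a wavelet-tree-like structure supporting $\pi(i)$ and $\pi^{-1}(i)$ in time $\bigo(1+\log\nRuns)$ (after the rebalancing step that caps the depth at $5\lg\nRuns$). The only quantities that change are the space terms: the $n(2+H(\Runs))(1+o(1))$ term becomes $n(2+\lg\nRuns)(1+o(1))$, and I would absorb the additive constant, writing $2+\lg\nRuns \le \lceil\lg\nRuns\rceil + O(1)$, so that this term is $n\lceil\lg\nRuns\rceil(1+o(1))$; and the $\bigo(\nRuns\log n)$ pointer term must be reduced to $\bigo(\log n)$.

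The one genuine step beyond pure bookkeeping is eliminating the $\bigo(\nRuns\log n)$ tree-pointer overhead. The idea is that when the tree shape is not being exploited for its entropy, we may as well use a \emph{balanced} tree of height $\lceil\lg\nRuns\rceil$ instead of a Hu-Tucker tree, and a balanced binary tree on $\nRuns$ leaves has an implicit (pointerless) representation: the bitmaps of the levels can be concatenated level by level into a single bitmap of total length $\le n\lceil\lg\nRuns\rceil$, with navigation done arithmetically (the standard ``levelwise wavelet tree'' layout), so only $\bigo(\log n)$ bits are needed to store $n$, $\nRuns$, and the run-boundary information — in fact the run lengths need not be stored explicitly at all, since with a balanced shape the leaf boundaries within each level are determined by $\nRuns$ alone. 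I would note that the $rank/select$ directories still cost only $o(n\lceil\lg\nRuns\rceil)$ extra bits by the discussion in \S\ref{sec:sequences}, and that merging during construction still works because property~(1) of the code — leaves in left-to-right run order — is trivially satisfied by assigning runs to leaves in order, regardless of the tree being balanced.

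The main obstacle, such as it is, is making sure the navigation operations and the claimed $\bigo(1+\log\nRuns)$ time still go through with the pointerless balanced layout: the downward and upward traversals described in the proof of Theorem~\ref{thm:main} use $rank_0(B,s)$ and $rank_1(B,s)$ where $s$ is the subtree size, and with a levelwise layout one must track the horizontal offset of the current node's bitmap within its level and its width, updating both by $rank$ queries exactly as before — this is routine but must be stated. I do not expect any difficulty with correctness of $\pi(i)$ and $\pi^{-1}(i)$, since the algorithms are identical; only the physical addressing of bitmaps changes. Hence the proof is short: cite Theorem~\ref{thm:main}'s construction, replace the Hu-Tucker tree by a balanced one, observe $H(\Runs)\le\lg\nRuns$ and collapse constants into the ceiling, use the implicit balanced-tree layout to drop the pointer term to $\bigo(\log n)$, and note that the time bound is unchanged (indeed the rebalancing subtlety of Theorem~\ref{thm:main} disappears, as the tree is balanced from the start).
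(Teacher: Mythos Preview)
The paper states Corollary~\ref{cor:mainbal} without proof, presenting it merely as the simplification of Theorem~\ref{thm:main} ``when the distribution $\Runs$ is not particularly favorable.'' Your approach---replacing the Hu--Tucker tree by a balanced one and using a pointerless levelwise layout---is correct and is exactly what is needed to reach the stated bound; a direct application of Theorem~\ref{thm:main} with $H(\Runs)\le\lg\nRuns$ would only give $n(2+\lg\nRuns)(1+o(1))+\bigo(\nRuns\log n)$, which matches neither the leading term (when $\nRuns=\bigo(1)$) nor the additive $\bigo(\log n)$. So you have correctly identified the one non-bookkeeping step.

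One small wobble in your write-up: the step ``$2+\lg\nRuns\le\lceil\lg\nRuns\rceil+\bigo(1)$, so this term is $n\lceil\lg\nRuns\rceil(1+o(1))$'' does not actually go through for bounded $\nRuns$, since the extra $\bigo(n)$ is not absorbed by the $(1+o(1))$ factor. Fortunately this step is unnecessary: once you commit to the balanced tree of height $\lceil\lg\nRuns\rceil$, every element contributes at most $\lceil\lg\nRuns\rceil$ bits directly (one per level it traverses), giving total bitmap length $\le n\lceil\lg\nRuns\rceil$ without passing through the entropy bound at all. With that correction, and padding $\nRuns$ to a power of two so that every level has length exactly $n$ (hence level boundaries are implicit given $n$), your argument is complete.
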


As a corollary, we obtain a new proof of a well-known result on adaptive
algorithms telling that one can sort in time $O(n(1+\log\nRuns))$~%
\cite{measuresOfPresortednessAndOptimalSortingAlgorithms},
now refined to consider the entropy of the partition and not only its size.

\begin{corollary}
  We can sort an array of length $n$ covered by $\nRuns$ runs of
  lengths $\Runs$ in time $\bigo(n(1+H(\Runs)))$, which is worst-case
  optimal in the comparison model among all permutations with
  $\nRuns$ runs of lengths $\Runs$ so that $\nRuns\log n =
  o(nH(\Runs))$.
\label{cor:mainsort}
\end{corollary}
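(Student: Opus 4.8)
The plan is to split the statement into two halves: the upper bound (we can sort in time $\bigo(n(1+H(\Runs)))$) and the matching lower bound (this is worst-case optimal in the comparison model over the relevant class of permutations).

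For the upper bound I would simply invoke the construction in the proof of Theorem~\ref{thm:main}. There, building the Hu-Tucker-shaped wavelet tree on the runs of $\pi$ is done by a sequence of merges that, as a side effect, physically sorts $\pi$; the running time was shown to be $\bigo(n+\nRuns\log\nRuns)$ plus the total number of bits written to all bitmaps, and the latter is $\sum n_i\ell_i = L < n(2+H(\Runs))$. Since $\nRuns\log\nRuns \le \nRuns\log n = \bigo(n\,H(\Runs))$ whenever $\nRuns=\omega(1)$ (and the term is $\bigo(n)$ otherwise), the whole cost collapses to $\bigo(n(1+H(\Runs)))$. So the first half is essentially a repackaging of work already done; the only thing to check is that detecting the down steps and setting up $\Runs$ is also $\bigo(n)$, which the proof of Theorem~\ref{thm:main} already states.

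For the lower bound I would argue via a counting/information-theoretic bound in the comparison-tree model. Fix the composition $\Runs=\langle n_1,\ldots,n_\nRuns\rangle$. Any permutation all of whose maximal ascending runs have exactly these lengths, in this order, is uniquely determined by how the $n$ elements are distributed among the $\nRuns$ runs (each run being then forced into increasing order), so the number of such permutations is the multinomial coefficient $\binom{n}{n_1,\ldots,n_\nRuns}$ — or at least a constant fraction of it, if one wants to be careful about whether a run boundary is ``forced'' by a strict descent. A comparison-based algorithm that sorts every permutation in this class must have a decision tree with at least that many leaves, hence worst-case depth at least $\lg\binom{n}{n_1,\ldots,n_\nRuns}$. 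A standard estimate gives $\lg\binom{n}{n_1,\ldots,n_\nRuns} = nH(\Runs) - \bigo(\nRuns\log n)$ (Stirling on each factorial, with the $\bigo(\log n)$ per term absorbed into the $\nRuns\log n$). Under the stated hypothesis $\nRuns\log n = o(nH(\Runs))$, this lower bound is $nH(\Runs)(1-o(1)) = \Omega(nH(\Runs))$, which matches the upper bound up to a constant factor. Hence the algorithm is worst-case optimal over this class, as claimed.

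The main obstacle is the lower-bound side, specifically being precise about which permutations to count: one must make sure the chosen family genuinely has $\Runs$ as its run-length composition (so that the multinomial count is valid) while still being large enough that its logarithm is $nH(\Runs)-\bigo(\nRuns\log n)$ rather than something smaller. A clean way around this is to observe that forcing each of the $\nRuns-1$ internal boundaries to be a strict descent removes only a lower-order factor (at most $2^{\nRuns}$, hence $\bigo(\nRuns)$ bits), which is again swallowed by the $\bigo(\nRuns\log n)$ slack and by the hypothesis $\nRuns\log n=o(nH(\Runs))$. Everything else is routine: the upper bound is already proved inside Theorem~\ref{thm:main}, and the Stirling estimate for the multinomial is standard.
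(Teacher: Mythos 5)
Your overall architecture coincides with the paper's: the upper bound is exactly the merging cost already accounted for inside the proof of Theorem~\ref{thm:main} (the $\bigo(n+\nRuns\log\nRuns)$ term plus the $\sum n_i\ell_i < n(2+H(\Runs))$ bits written to the bitmaps, the $\nRuns\log\nRuns$ term being absorbed because $nH(\Runs)\geq(\nRuns-1)\lg n$), and the lower bound is the information-theoretic count $\lg\frac{n!}{n_1!\cdots n_\nRuns!}=nH(\Runs)-\bigo(\nRuns\log n)$, which the hypothesis $\nRuns\log n=o(nH(\Runs))$ turns into $\Omega(nH(\Runs))$. The upper-bound half is complete.

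The gap is in your patch for the ``exact run lengths'' issue on the lower-bound side. You claim that restricting the $\frac{n!}{n_1!\cdots n_\nRuns!}$ arrangements to those in which each of the $\nRuns-1$ block boundaries is a genuine descent loses at most a factor of $2^{\nRuns}$. That is false in general: the boundary events $\max(B_i)>\min(B_{i+1})$ are not independent, and their conjunction can be far rarer than $2^{-\nRuns}$. In the extreme case $n_1=\cdots=n_\nRuns=1$ the multinomial is $n!$ while exactly one permutation (the decreasing one) has all boundaries descending, a loss factor of $n!$ rather than $2^n$. That particular case happens to violate the hypothesis, but you have neither proved the $2^{\nRuns}$ bound nor shown that the actual loss is $o(nH(\Runs))$ whenever $\nRuns\log n=o(nH(\Runs))$ holds; the same objection applies to your parenthetical ``at least a constant fraction.'' The clean resolution --- and the one the paper's phrasing ``covered by $\nRuns$ runs of lengths $\Runs$'' points to --- is to take the class to be the permutations that can be partitioned into contiguous increasing blocks of lengths exactly $n_1,\ldots,n_\nRuns$, i.e., whose descent set is \emph{contained in} the corresponding boundary set. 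That class has cardinality exactly the multinomial coefficient, it contains the exact-runs class, and the optimality claim is then established with respect to it with no correction factor at all.
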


\subsection{Stricter Runs}
\label{sec:stricter}

Some classes of permutations can be covered by a small number of
runs of a stricter type.
We present an encoding scheme which uses $o(n)$ bits for encoding the
permutations from those classes, and still $\bigo(n\lg n)$ bits for
all others.

\begin{definition}
  A \emph{strict run} in a permutation $\pi$ is a maximal range of positions
  satisfying $\pi(i+k)=\pi(i)+k$. The {\em head} of such run is its first
  position.
  The number of strict runs of $\pi$ is noted $\nSRuns$, 
  and the sequence of the lengths of the strict runs is noted
  $\SRuns$. We will call $\HRuns$ the sequence of run lengths of the 
  sequence formed by the strict run heads of $\pi$.
\end{definition}

For example, permutation
$(\mathit{6},\mathit{7},\mathit{8},\mathit{9},\mathit{10},\mathbf{1},\mathbf{2},\mathbf{3},\mathbf{4},\mathbf{5})$ contains $\nSRuns=2$ strict runs, of lengths
$\SRuns = \langle 5,5 \rangle$. The run heads are $\langle \mathit{6},
\mathbf{1} \rangle$, and contain 2 runs, of lengths $\HRuns =
\langle 1,1 \rangle$. Instead, 
$(\mathit{1},\mathit{3},\mathit{5},\mathit{7},\mathit{9},\mathbf{2},\mathbf{4},\mathbf{6},\mathbf{8},\mathbf{10})$ contains $\nSRuns=10$ strict runs, all of length 1.

\begin{theorem}
  There is an encoding scheme using at most $\nSRuns H(\HRuns)(1+o(1)) + 
  2\nSRuns\lg\frac{n}{\nSRuns} + o(n) + \bigo(\nSRuns + \nRuns\log\nSRuns)$ 
  bits to encode a permutation $\pi$ over $[n]$ covered by $\nSRuns$ strict 
  runs and by $\nRuns\le\nSRuns$ runs, and with $\HRuns$ being the $\nRuns$ 
  run lengths in the permutation of strict run heads. 
  It supports $\pi(i)$ and $\pi^{-1}(i)$ in 
  time $\bigo(1+\log\nRuns)$ for any value of $i\in[n]$. If $i$ is chosen 
  uniformly at random in $[n]$ then the average time is $\bigo(1+H(\HRuns))$.
\label{thm:strict}
\end{theorem}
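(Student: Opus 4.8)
The plan is to combine Theorem~\ref{thm:main} with a second level of indirection that factors the permutation $\pi$ into its strict runs. First I would reduce $\pi$ to the permutation $\pi'$ of its $\nSRuns$ strict run heads: if $h_1 < h_2 < \cdots$ are the heads in position order and $\pi(h_t)$ their images, then $\pi'$ on $[\nSRuns]$ is defined by ranking the values $\pi(h_t)$. The key observation is that a down step of $\pi$ occurs exactly at the boundary between two strict runs whose heads are in the wrong relative order, so the runs of $\pi$ and the runs of $\pi'$ correspond, and $\pi'$ has exactly $\nRuns$ runs whose length sequence is $\HRuns$. Then I would apply Theorem~\ref{thm:main} to $\pi'$, obtaining an encoding in $\nSRuns(2+H(\HRuns))(1+o(1)) + \bigo(\nRuns\log\nSRuns)$ bits that supports $\pi'$ and $\pi'^{-1}$ in time $\bigo(1+\log\nRuns)$, and on average $\bigo(1+H(\HRuns))$.

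Next I would store the shape of the strict runs with two bitmaps. A bitmap $R[1,n]$ marks the heads of the strict runs of $\pi$ in the domain, and a bitmap $R'[1,n]$ marks the positions $\pi(h_t)$, i.e. the heads of the strict runs as they appear sorted by value. Each bitmap has $\nSRuns$ bits set among $n$, so using the Raman \etal\ representation each costs $\nSRuns\lg\frac{n}{\nSRuns} + \bigo(\nSRuns) + o(n)$ bits and supports $rank$/$select$ in constant time; together this is the $2\nSRuns\lg\frac{n}{\nSRuns} + o(n) + \bigo(\nSRuns)$ term. To compute $\pi(i)$: locate the strict run of $i$ via $t = rank_1(R,i)$, compute the within-run offset $i - select_1(R,t)$, map the head through $\pi'$ to get $t' = \pi'(t)$, and read off $\pi(i) = select_1(R',t') + (i - select_1(R,t))$. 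For $\pi^{-1}(i)$ the symmetric computation uses $R'$ for the lookup and $\pi'^{-1}$ for the mapping. Each operation does $\bigo(1)$ bitmap operations plus one operation on the structure for $\pi'$, so the worst-case time is $\bigo(1+\log\nRuns)$; and since a uniformly random $i\in[n]$ lands in strict run $t$ with probability proportional to its length, which need not match the distribution $\HRuns$ weighting run heads, the averaging needs a small extra argument—but the dominant contribution is still the $\bigo(1+H(\HRuns))$ cost of querying $\pi'$, and the bitmap overhead is $\bigo(1)$.

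The accounting of the space bound is then routine: $\nSRuns(2+H(\HRuns))(1+o(1))$ from the wavelet tree on $\pi'$ absorbs as $\nSRuns H(\HRuns)(1+o(1)) + \bigo(\nSRuns)$ (the $2\nSRuns$ is $\bigo(\nSRuns)$), the tree pointers give $\bigo(\nRuns\log\nSRuns)$, and the two Raman bitmaps give the remaining terms, for a total of $\nSRuns H(\HRuns)(1+o(1)) + 2\nSRuns\lg\frac{n}{\nSRuns} + o(n) + \bigo(\nSRuns + \nRuns\log\nSRuns)$ as claimed. The main obstacle I anticipate is verifying precisely that the run structure of $\pi'$ is $\HRuns$ — one must check that collapsing each strict run to a single head neither creates nor destroys down steps, i.e. that $\pi(h_{t+1}) < \pi(h_t)$ iff there is a down step at the end of strict run $t$ — and, secondarily, confirming that the average-time claim survives the mismatch between the length-weighted distribution of a random position and the head-weighted distribution $\HRuns$; I expect this to cost only an additive $\bigo(1)$ but it deserves an explicit line.
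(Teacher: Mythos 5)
Your construction is essentially identical to the paper's proof: the same collapsed permutation $\pi'$ of strict run heads, the same two Raman-compressed bitmaps (the paper calls your $R'$ by the name $R^{inv}$, defined via $R^{inv}[i]=R[\pi^{-1}(i)]$), the same query formulas $\pi(i)=select_1(R^{inv},\pi'(rank_1(R,i)))+i-select_1(R,rank_1(R,i))$ and its symmetric inverse, and the same space accounting. The average-case subtlety you flag---that a uniform position in $[n]$ induces a strict-run-length-weighted rather than head-weighted distribution over the leaves of the wavelet tree for $\pi'$---is a real observation, but the paper's own proof passes over it in exactly the same way, simply asserting the $\bigo(1+H(\HRuns))$ average bound inherited from the structure for $\pi'$.
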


\begin{proof}
We first set up a bitmap $R$ marking with a 1 bit the beginning of the strict
runs. Set up a second bitmap $R^{inv}$ such that $R^{inv}[i] = R[\pi^{-1}(i)]$.
Now we create a new permutation $\pi'$ of $[\nSRuns]$ which collapses the 
strict runs of $\pi$, $\pi'(i) = rank_1(R^{inv},\pi(select_1(R,i)))$.
All this takes $\bigo(n)$ time and the bitmaps take 
$2\nSRuns\lg\frac{n}{\nSRuns} + \bigo(\nSRuns) + o(n)$ bits using 
Raman \etal's technique, where $rank$ and $select$ are solved in constant time
(\S\ref{sec:sequences}).

Now build the structure of Thm.~\ref{thm:main} for $\pi'$. The number of down
steps in $\pi$ is the same as for the sequence of strict run heads in $\pi$,
and in turn the same as the down steps in $\pi'$. So the number of
runs in $\pi'$ is also $\nRuns$ and their lengths are $\HRuns$.
Thus we get at most $\nSRuns(2+H(\HRuns))(1+o(1)) + \bigo(\nRuns\log \nSRuns)$
bits to encode $\pi'$, and can compute $\pi'$ and its inverse in 
$\bigo(1+\log \nRuns)$ worst case and $\bigo(1+H(\HRuns))$ average time.

To compute $\pi(i)$, we find $i' \leftarrow rank_1(R,i)$ and then compute 
$j' \leftarrow \pi'(i')$. The final answer is 
$select_1(R^{inv},j') + i-select_1(R,i')$.
To compute $\pi^{-1}(i)$, we find $i' \leftarrow rank_1(R^{inv},i)$ and then
compute $j' \leftarrow (\pi')^{-1}(i')$. The final answer is 
$select_1(R,j') + i-select_1(R^{inv},i')$. This adds only constant time on top
of that to compute $\pi'$ and its inverse.
\end{proof}

Once again, we might simplify the results when the distribution $\HRuns$ is
not particularly favorable, and we also obtain
interesting algorithmic results on sorting.

\begin{corollary}
  There is an encoding scheme using at most 
  $\nSRuns\lceil\lg\nRuns\rceil (1+o(1)) + 
  2\nSRuns\lg\frac{n}{\nSRuns} + \bigo(\nSRuns) + o(n)$ bits to encode
  a permutation $\pi$ over $[n]$ covered by $\nSRuns$ 
  strict runs and by $\nRuns\le\nSRuns$ runs. It supports 
  $\pi(i)$ and $\pi^{-1}(i)$ in time $\bigo(1+\log\nRuns)$ 
  for any value of $i\in[n]$. 
\label{cor:strictbal}
\end{corollary}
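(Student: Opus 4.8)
The plan is to derive Corollary~\ref{cor:strictbal} directly from Theorem~\ref{thm:strict} by bounding the favorable term $\nSRuns H(\HRuns)$ with a distribution-independent quantity. The starting point is the inequality from Definition~\ref{def:entrop}: since $\HRuns$ is a sequence of $\nRuns$ positive integers adding up to $\nSRuns$ (it records the run lengths of the permutation of strict-run heads, and there are $\nSRuns$ heads partitioned into $\nRuns$ runs), we have $H(\HRuns)\le\lg\nRuns$. Substituting this into the first summand of the bound in Theorem~\ref{thm:strict} replaces $\nSRuns H(\HRuns)(1+o(1))$ by $\nSRuns\lg\nRuns(1+o(1))$, which we then round up to $\nSRuns\lceil\lg\nRuns\rceil(1+o(1))$ to match the statement.

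Next I would reconcile the remaining error terms. Theorem~\ref{thm:strict} carries an additive $\bigo(\nSRuns + \nRuns\log\nSRuns)$, whereas the corollary states only $\bigo(\nSRuns)$ (plus the $o(n)$ and the $2\nSRuns\lg\frac{n}{\nSRuns}$ terms, which are copied verbatim). The point is that $\nRuns\log\nSRuns$ is absorbed: we have $\nRuns\le\nSRuns$, and more usefully $\nRuns\log\nSRuns = \bigo(\nSRuns\lg\nRuns)$ whenever $\nRuns=\omega(1)$ — indeed $\nRuns\log\nSRuns\le\nSRuns\log\nSRuns$, but that is too weak, so instead one observes that if $\nRuns\ge 2$ then $\nRuns\log\nSRuns\le \nRuns\log n = \bigo(\nSRuns\lceil\lg\nRuns\rceil)$ is not immediate either; the cleanest route is to note $\log\nSRuns = \bigo(\log n)$ and fold $\bigo(\nRuns\log n)$ into the $o(n)$ term exactly as done in Corollary~\ref{cor:mainbal} (which turned the $\bigo(\nRuns\log n)$ of Theorem~\ref{thm:main} into $\bigo(\log n)$, here generously upper-bounded). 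If $\nRuns=\bigo(1)$, the whole structure of Theorem~\ref{thm:main} degenerates and the term is $\bigo(1)$ anyway. Either way the surviving non-principal terms are $\bigo(\nSRuns) + o(n) + 2\nSRuns\lg\frac{n}{\nSRuns}$, as claimed.

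The construction, the correctness of $\pi(i)$ and $\pi^{-1}(i)$, and the $\bigo(1+\log\nRuns)$ worst-case query time are all inherited unchanged from Theorem~\ref{thm:strict}, since the corollary does not claim the refined average-time bound; I would simply state that these carry over. The main obstacle is bookkeeping rather than mathematical: one must be careful that the $(1+o(1))$ factor is attached to $\nSRuns\lceil\lg\nRuns\rceil$ and not allowed to multiply the $2\nSRuns\lg\frac{n}{\nSRuns}$ term (which must stay additive and factor-free, since in the intended applications $\nSRuns$ may be close to $n$ and a stray $o(1)$ factor there would be meaningless), and that the rounding $\lg\nRuns\rightsquigarrow\lceil\lg\nRuns\rceil$ together with the $o(1)$ slack genuinely dominates the original fractional bound. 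Once the term-by-term domination $\nSRuns H(\HRuns)\le\nSRuns\lceil\lg\nRuns\rceil$ and $\bigo(\nRuns\log\nSRuns)\subseteq o(n)+\bigo(\nSRuns)$ are checked, the corollary follows.
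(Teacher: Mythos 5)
Your first step is fine: $\HRuns$ consists of $\nRuns$ positive integers summing to $\nSRuns$, so Definition~\ref{def:entrop} gives $H(\HRuns)\le\lg\nRuns$ and the leading term $\nSRuns H(\HRuns)(1+o(1))$ is dominated by $\nSRuns\lceil\lg\nRuns\rceil(1+o(1))$. The gap is in your treatment of the $\bigo(\nRuns\log\nSRuns)$ term, and it is not mere bookkeeping. Your final claim ``$\bigo(\nRuns\log\nSRuns)\subseteq o(n)+\bigo(\nSRuns)$'' is false in general: taking $\nRuns=\nSRuns=\Theta(n)$ gives $\nRuns\log\nSRuns=\Theta(n\log n)$. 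The weaker true inequality $\nRuns\lg\nSRuns\le\nSRuns\lg\nRuns+\bigo(\nSRuns)$ (via monotonicity of $t/\lg t$) does not rescue you either, because it only shows $\bigo(\nRuns\log\nSRuns)=\bigo(\nSRuns\lg\nRuns)$ with an uncontrolled constant, whereas the corollary insists on leading coefficient $1+o(1)$ in front of $\nSRuns\lceil\lg\nRuns\rceil$; and again with $\nRuns=\nSRuns$ the term is not $o(\nSRuns\lg\nRuns)$, so it cannot hide in the $(1+o(1))$ slack. In short, no term-by-term domination starting from the \emph{statement} of Theorem~\ref{thm:strict} can yield the corollary.

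The step you gesture at parenthetically is the actual proof: the $\bigo(\nRuns\log\nSRuns)$ term is not ``folded'' away by an inequality, it is \emph{removed} by changing the data structure. One re-enters the construction of Theorem~\ref{thm:strict} — bitmaps $R$ and $R^{inv}$ and the collapsed permutation $\pi'$ over $[\nSRuns]$ with $\nRuns$ runs — and represents $\pi'$ with Corollary~\ref{cor:mainbal} instead of Theorem~\ref{thm:main}, i.e., with a perfectly balanced, pointerless wavelet tree of depth $\lceil\lg\nRuns\rceil$ rather than the Hu-Tucker-shaped tree with $\bigo(\nRuns)$ explicit pointers. That yields $\nSRuns\lceil\lg\nRuns\rceil(1+o(1))+\bigo(\log\nSRuns)$ bits for $\pi'$ (the entropy bound $H(\HRuns)\le\lg\nRuns$ is then not even needed, since $\lceil\lg\nRuns\rceil$ is just the tree depth), which together with the $2\nSRuns\lg\frac{n}{\nSRuns}+\bigo(\nSRuns)+o(n)$ bits for the bitmaps gives exactly the stated bound, and the $\bigo(1+\log\nRuns)$ query time is the balanced tree depth plus $\bigo(1)$ for the $rank$/$select$ on $R$ and $R^{inv}$. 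You should make that substitution explicit rather than argue at the level of the theorem's stated space bound.
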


\begin{corollary}
We can sort a permutation of $[n]$, covered by $\nSRuns$ strict runs and
by $\nRuns$ runs, and $\HRuns$ being the run lengths of the strict run heads,
in time $\bigo(n + \nSRuns H(\HRuns)) = \bigo(n+\nSRuns\log\nRuns)$, which is
worst-case optimal, in the comparison model, among all permutations sharing these
$\nRuns$, $\nSRuns$, and $\HRuns$ values, such that $\nRuns\log \nSRuns =
o(\nSRuns H(\HRuns))$.
\label{cor:strictsort}
\end{corollary}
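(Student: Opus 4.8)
The plan is to derive Corollary~\ref{cor:strictsort} directly from Theorem~\ref{thm:strict} in exactly the same way Corollary~\ref{cor:mainsort} follows from Theorem~\ref{thm:main}, namely by observing that the proof of Theorem~\ref{thm:strict} is constructive and its running time is dominated by the construction cost rather than the query cost. First I would recall from the proof of Theorem~\ref{thm:strict} that, given the input array, we can in $\bigo(n)$ time detect the strict runs, build the bitmaps $R$ and $R^{inv}$ (with constant-time $rank$/$select$ support), form the collapsed permutation $\pi'$ over $[\nSRuns]$, and then invoke the construction of Theorem~\ref{thm:main} on $\pi'$. That latter construction, as analyzed in the proof of Theorem~\ref{thm:main}, actually \emph{sorts} the underlying sequence as a byproduct: it runs the Hu--Tucker algorithm on the run-length distribution (here $\HRuns$, in $\bigo(\nRuns\log\nRuns)$ time) and then performs a sequence of merges whose total cost is $\bigo(\nSRuns + \nRuns\log\nRuns)$ plus the total number of bits written, which is $\sum n_i\ell_i = \bigo(\nSRuns(1+H(\HRuns)))$. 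Hence sorting $\pi'$, and therefore recovering the sorted order of the strict-run heads, costs $\bigo(\nSRuns(1+H(\HRuns)))$; since a strict run, once its head is placed, is sorted internally for free, unfolding $\pi'$'s sorted order back into the full sorted sequence of $[n]$ takes an additional $\bigo(n)$ time using $R$ and $R^{inv}$.

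Adding these contributions gives total time $\bigo(n + \nSRuns(1+H(\HRuns))) = \bigo(n + \nSRuns H(\HRuns))$, where the $\nSRuns$ term is absorbed into $n$ because $\nSRuns \le n$. For the second stated bound, I would invoke the entropy inequality from Definition~\ref{def:entrop}: since $\HRuns$ is a composition of $\nSRuns$ into $\nRuns$ parts, $H(\HRuns) \le \lg\nRuns$, so $\nSRuns H(\HRuns) = \bigo(\nSRuns\log\nRuns)$, yielding the form $\bigo(n+\nSRuns\log\nRuns)$.

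For the optimality claim I would argue via a counting (information-theoretic) lower bound in the comparison model. Fix the values $\nRuns$, $\nSRuns$, and the run-length sequence $\HRuns$. The number of distinct permutations of $[n]$ with exactly these parameters is at least the number of ways of interleaving $\nRuns$ sorted blocks of prescribed sizes $\HRuns$ on the level of strict-run heads; a standard multinomial estimate shows this count has logarithm $\Theta(\nSRuns H(\HRuns))$ (up to the $\bigo(\nRuns\log\nSRuns)$ slack coming from specifying which positions are run boundaries), so any comparison-based sorter needs $\Omega(\nSRuns H(\HRuns))$ comparisons on the worst such input. Combined with the trivial $\Omega(n)$ lower bound for reading the input, this matches our $\bigo(n+\nSRuns H(\HRuns))$ upper bound precisely under the stated hypothesis $\nRuns\log\nSRuns = o(\nSRuns H(\HRuns))$, which guarantees the constructive overhead $\bigo(\nRuns\log\nSRuns)$ does not dominate and that the lower-bound slack term is negligible.

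The main obstacle is making the lower-bound counting argument clean: one must be careful that the family of permutations being counted genuinely all share the same $\nRuns$, $\nSRuns$, and $\HRuns$ (not merely the same $\nSRuns$ and $\nRuns$), and that the $\Theta(\nSRuns H(\HRuns))$ estimate for the logarithm of the count of valid interleavings is tight on both sides — the upper side following from the entropy bound and the lower side from a multinomial-coefficient lower bound of the form $\binom{\nSRuns}{n_1,\ldots,n_\nRuns} \ge 2^{\nSRuns H(\HRuns) - \bigo(\nRuns\log\nSRuns)}$. Everything else is bookkeeping already carried out in the proofs of Theorems~\ref{thm:main} and~\ref{thm:strict}.
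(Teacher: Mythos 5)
Your proposal is correct and follows the same route the paper intends: the corollary is read off from the constructive proofs of Theorems~\ref{thm:main} and~\ref{thm:strict} (whose construction sorts the collapsed permutation $\pi'$ as a byproduct, after $\bigo(n)$ preprocessing to detect strict runs), combined with the standard information-theoretic lower bound of $\lg$ of the number of permutations sharing $\nRuns$, $\nSRuns$ and $\HRuns$, which is $\Theta(\nSRuns H(\HRuns))$ under the stated hypothesis. One small point worth making explicit: the $\bigo(\nRuns\log\nRuns)$ Hu--Tucker overhead is absorbed into the claimed time bound unconditionally (not only under the optimality hypothesis), because Definition~\ref{def:entrop} gives $H(\HRuns)\ge\frac{\nRuns\lg\nSRuns}{\nSRuns}$ and hence $\nRuns\log\nRuns\le\nRuns\log\nSRuns=\bigo(\nSRuns H(\HRuns))$.
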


\subsection{Shuffled Sequences}
\label{sec:shuffled-upsequences}

Levcopoulos and Petersson~\cite{sortingShuffledMonotoneSequences}
introduced the more sophisticated concept of partitions formed by
interleaved runs, such as \emph{Shuffled UpSequences} (SUS).
We discuss here the advantage of considering permutations
formed by shuffling a small number of runs.

\begin{definition}
  A decomposition of a permutation $\pi$ over $[n]$ into
  \emph{Shuffled UpSequences} is a set of, not necessarily consecutive,
  subsequences of increasing numbers that have to be removed from $\pi$ 
  in order to reduce it to the empty sequence.
  The minimum number of shuffled upsequences in such a decomposition
  of $\pi$ is noted $\nSUS$, and the sequence of the lengths of the
  involved shuffled upsequences, in arbitrary order, is noted $\SUS$.
\end{definition}

For example, permutation
$(\mathit{1},\mathbf{6},\mathit{2},\mathbf{7},\mathit{3},\mathbf{8},\mathit{4},\mathbf{9},\mathit{5},\mathbf{10})$ contains $\nSUS=2$ shuffled upsequences
of lengths $\SUS=\langle 5,5\rangle$, but $\nRuns=5$ runs, all of length 2.
Whereas the decomposition of a permutation into runs or strict runs
can be computed in linear time, the decomposition into shuffled
upsequences requires a bit more time.
Fredman~\cite{onComputingTheLengthOfLongestIncreasingSubsequences}
gave an algorithm to compute the size of an optimal partition,
claiming a worst case complexity of $\bigo(n\log n)$.
In fact his algorithm is adaptive and takes $\bigo(n(1+\log\nSUS))$ time.
We give here a variant of his algorithm which computes the partition
itself within the same complexity, 
and we achieve even better time on favorable sequences $\SUS$.

\begin{lemma}
  \label{lem:partitionInSUS}
  Given a permutation $\pi$ over $[n]$ covered by $\nSUS$ shuffled
  upsequences of lengths $\SUS$, there is an algorithm finding such a 
  partition in time $\bigo(n(1+H(\SUS)))$.
\end{lemma}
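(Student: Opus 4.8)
The plan is to adapt Fredman's patience-sorting style algorithm, which maintains the set of upsequences greedily, but to replace its $\bigo(n\log\nSUS)$ bottleneck — a binary search over the current upsequence heads for each inserted element — by a search whose cost depends on which upsequence the element joins. First I would recall the structure of the greedy algorithm: process $\pi(1),\pi(2),\ldots,\pi(n)$ left to right, maintaining a collection of upsequences, each represented by the last (largest) element appended to it; when processing $\pi(i)$, append it to the upsequence whose tail is the largest value smaller than $\pi(i)$ (equivalently, the one that keeps the tails as small as possible). A classical exchange argument shows this uses exactly $\nSUS$ upsequences, and that the final upsequences have lengths $\SUS$ (as a multiset); so correctness of the \emph{partition} is inherited from Fredman, and only the running time needs a new argument.

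The key step is the choice of search structure over the tails. The tails of the current upsequences are, at every moment, a sorted sequence, and an insertion of $\pi(i)$ replaces one tail by a larger value (or creates a new singleton upsequence). I would keep the tails in a balanced structure that supports \emph{finger searches} or, more directly, assign to each upsequence a position and keep the tails in a dynamic predecessor structure whose query time for the $j$-th upsequence is $\bigo(1+\log(\text{size reached so far by that upsequence}))$ — concretely, one can use the same Hu–Tucker / weight-balanced idea as in Thm.~\ref{thm:main}: build the search tree over the $\nSUS$ upsequences with leaf weights proportional to their final lengths $\SUS$, so that locating upsequence $k$ costs $\bigo(1+\log\frac{n}{n_k})$ where $n_k$ is its length. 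Since upsequence $k$ receives $n_k$ elements over the whole run, the total search cost is $\sum_k n_k\cdot\bigo(1+\log\frac{n}{n_k}) = \bigo(n(1+H(\SUS)))$, which is the claimed bound. One subtlety is that the weights are not known in advance; I would handle this either by a doubling/rebuilding argument (rebuild the weighted tree whenever an upsequence doubles its size, charging the rebuild cost to the new elements) or by running the algorithm once with the cheap $\bigo(n\log\nSUS)$ method just to learn $\SUS$ and $\nSUS$, then a second pass with the weighted structure — the first pass is within budget since $\log\nSUS = \bigo(1+H(\SUS))$ is false in general, so the rebuilding route is the safer one and I would present that.

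The main obstacle I anticipate is precisely this chicken-and-egg issue: the entropy-sensitive search tree needs the length distribution that the algorithm is still computing. I expect to resolve it by the amortized rebuild: maintain the tree so that leaf $k$'s weight is always within a factor $2$ of the current size of upsequence $k$; whenever some upsequence's size crosses a power of two, rebuild the whole tree in $\bigo(\nSUS)$ time; the number of rebuilds is $\bigo(\sum_k \log n_k) = \bigo(\nSUS\log\frac{n}{\nSUS})$ which is $\bigo(n)$, and between rebuilds the search cost for an element going to upsequence $k$ is $\bigo(1+\log\frac{n}{(\text{current size of }k)})$, summing to $\bigo(n(1+H(\SUS)))$ as before. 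A secondary point to check is that an insertion must also \emph{update} the structure (change one tail value, possibly add a leaf), which the balanced tree supports in time no worse than the search; and that building the final partition explicitly — outputting the actual subsequences, not just their count — only requires threading each element to its upsequence's list as it is appended, adding $\bigo(n)$ overall. Assembling these pieces gives the stated $\bigo(n(1+H(\SUS)))$ bound.
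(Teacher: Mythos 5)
Your algorithmic skeleton is the right one --- Fredman's greedy rule, with the current tails kept in a search structure whose access cost is entropy-sensitive --- and your analysis of the search cost between rebuilds is sound, as are your observations that the relative order of the tails is invariant and that correctness of the partition is inherited from Fredman. The genuine gap is in the accounting of the rebuilds. You trigger a rebuild of cost $\bigo(\nSUS)$ each time some upsequence's size crosses a power of two, and you bound the \emph{number} of such events by $\bigo(\sum_k \log n_k)=\bigo(\nSUS\log\frac{n}{\nSUS})=\bigo(n)$; but what must fit in the budget is the number of rebuilds \emph{times} the $\bigo(\nSUS)$ cost of each, i.e.\ $\bigo(\nSUS\sum_k\log n_k)$, and this is not $\bigo(n(1+H(\SUS)))$ in general. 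For instance, with $\nSUS=n^{2/3}$ upsequences each of length $n^{1/3}$, the rebuild cost is $\Theta(n^{4/3}\log n)$ while the target bound is only $\Theta(n\log n)$. Charging each rebuild to the elements that caused the doubling does not rescue the argument: the rebuild triggered when an upsequence grows from $1$ to $2$ must be paid for by a single element, which would then absorb $\bigo(\nSUS)$ work, and there can be $\Theta(\nSUS)$ such events.

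The fix is either a structure that adjusts weights locally (a biased search tree in which incrementing one leaf's weight costs no more than accessing that leaf), or --- as the paper does --- simply a splay tree over the tails. With a splay tree no weights need to be maintained at all: the node of upsequence $k$ is accessed $n_k$ times out of $n$ accesses, so the Static Optimality Theorem of Sleator and Tarjan directly yields total access time $\bigo(n(1+H(\SUS)))$, and the chicken-and-egg problem you correctly identify (the weights are only known a posteriori) never arises. Only your choice and analysis of the dynamic search structure needs to be replaced; the rest of the argument matches the paper's.
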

\begin{proof}
  Initialize a sequence $S_1=(\pi(1))$, and a splay tree $T$ \cite{ST85}
  with the node $(S_1)$, ordered by the rightmost value of
  the sequence contained by each node.
  For each further element $\pi(i)$, search for the sequence with the
  maximum ending point smaller than $\pi(i)$.
  If any, add $\pi(i)$ to this sequence, otherwise create a new
  sequence and add it to $T$.
  Fredman~\cite{onComputingTheLengthOfLongestIncreasingSubsequences}
  already proved that this algorithm computes an optimal
  partition. 
  The adaptive complexity results from the mere observation that the
  splay tree (a simple sorted array in Fredman's proof)
  contains at most $\nSUS$ elements, and that the node corresponding
  to a subsequence is accessed once per element in it. Hence the
  total access time is $\bigo(n(1+H(\SUS)))$ \cite[Thm.~2]{ST85}. 
  %
\end{proof}

The complete description of the permutation requires to encode the
computation of both the partitioning algorithm and the sorting one,
and this time the encoding cost of partitioning is as important as
that of merging.

\begin{theorem} \label{thm:SUS} There is an encoding scheme using at
  most $2n(1+H(\SUS)) + o(n\log\nSUS) + \bigo(\nSUS\log n)$ bits to
  encode a permutation $\pi$ over $[n]$ covered by $\nSUS$ shuffled
  upsequences of lengths $\SUS$.
  It supports the operations $\pi(i)$ and $\pi^{-1}(i)$ in time
  $\bigo(1+\log\nSUS)$ for any value of $i\in[n]$.
  If $i$ is chosen uniformly at random in $[n]$ the average time
  is $\bigo(1+H(\SUS)+\frac{\log\nSUS}{\log\log n})$.
\end{theorem}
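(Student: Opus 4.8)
The plan is to reduce both $\pi(i)$ and $\pi^{-1}(i)$ to $rank$ and $select$ queries on two sequences over the small alphabet $[\nSUS]$. First I would run the algorithm of Lemma~\ref{lem:partitionInSUS} to obtain, in $\bigo(n(1+H(\SUS)))$ time, a decomposition of $\pi$ into $\nSUS$ shuffled upsequences, and fix an arbitrary labelling $1,\ldots,\nSUS$ of them. I then form the sequence $S[1,n]$ over $[\nSUS]$ with $S[i]$ equal to the label of the upsequence containing position $i$, and the sequence $S'[1,n]$ with $S'[v]$ equal to the label of the upsequence containing the value $v$; since $S'[v]=S[\pi^{-1}(v)]$, this $S'$ is computable in $\bigo(n)$ time. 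The key observation is that a symbol $j$ occurs in $S$, and in $S'$, exactly as many times as the length of the $j$-th upsequence, so $H_0(S)=H_0(S')=H(\SUS)$.

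Next I would use that $S$ and $S'$ together recover $\pi$: position $i$ is the $k$-th position of upsequence $j=S[i]$, where $k=rank_j(S,i)$, and since that upsequence is increasing, its $k$-th position carries its $k$-th smallest value, i.e.\ the position of the $k$-th occurrence of $j$ in $S'$. So $\pi(i)=select_j(S',rank_j(S,i))$ with $j=S[i]$, and symmetrically $\pi^{-1}(i)=select_j(S,rank_j(S',i))$ with $j=S'[i]$. If $S$ and $S'$ are stored as wavelet trees, the pair $(j,k)$ comes out of a single root-to-leaf descent of the kind used to compute $\pi^{-1}$ in the proof of Theorem~\ref{thm:main}, the following $select$ is a leaf-to-root walk in the other tree of the kind used there to compute $\pi$, and each such traversal costs $\bigo(1)$ per level, i.e.\ time proportional to the depth of the leaf labelled $j$.

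The remaining work is to choose the two wavelet-tree representations so that all three stated bounds hold at once. I would give $S'$ a wavelet tree shaped like the Hu-Tucker tree of the frequencies $\SUS$ with plain bitmaps preprocessed for constant-time $rank$/$select$ (\S\ref{sec:sequences}): this costs $n(2+H(\SUS))(1+o(1))$ bits, and after the same marginal rebalancing of deep subtrees as in Theorem~\ref{thm:main} (adding only $o(n)$ further bits, or else the whole structure already had size $\bigo(\nSUS)=\bigo(1)$) every leaf lies at depth $\bigo(\log\nSUS)$, so queries take $\bigo(1+\log\nSUS)$ worst-case time and $\bigo(1+H(\SUS))$ time when the reached leaf is frequency-distributed, which it is when $i$ is uniform in $[n]$. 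I would give $S$ the multiary wavelet tree of Ferragina \etal~\cite{FMMN07}, using $nH_0(S)+o(n\log\nSUS)=nH(\SUS)+o(n\log\nSUS)$ bits and answering every query in $\bigo(1+\frac{\log\nSUS}{\log\log n})$ time. Adding $\bigo(\nSUS\log n)$ bits of tree pointers, the total is $n(2+H(\SUS))+nH(\SUS)+o(n\log\nSUS)+\bigo(\nSUS\log n)=2n(1+H(\SUS))+o(n\log\nSUS)+\bigo(\nSUS\log n)$; since each of $\pi(i),\pi^{-1}(i)$ does one traversal in each tree, the worst case is $\bigo(1+\log\nSUS)$ and, for uniform $i$, the average is $\bigo(1+H(\SUS)+\frac{\log\nSUS}{\log\log n})$.

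The step I expect to be most delicate is this last budgeting: hitting the constant $2n(1+H(\SUS))$ forces pairing the entropy-bounded but possibly deep binary wavelet tree with the shallow multiary one, since two copies of either would miss the space bound or the worst-case-time bound; and one has to recheck, exactly as in Theorem~\ref{thm:main}, that the Hu-Tucker rebalancing only displaces subtrees of total frequency $\bigo(n/\nSUS)$, so that it adds $o(n)$ bits and does not spoil the $\bigo(1+H(\SUS))$ average. A minor point to make explicit is that $S$ and $S'$ must refer to the same labelling of the upsequences, which holds by construction.
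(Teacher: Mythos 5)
Your proposal is correct and follows essentially the same route as the paper: partition via Lemma~\ref{lem:partitionInSUS}, a compressed (multiary) wavelet tree for the positional label sequence $S$, and a Hu-Tucker-shaped wavelet tree for the value side. The paper packages the latter as the Theorem~\ref{thm:main} encoding of the auxiliary permutation $\pi'$ plus an array $A$ of accumulated lengths, but that structure is exactly your wavelet tree of $S'$, and your query formula $\pi(i)=select_{S[i]}(S',rank_{S[i]}(S,i))$ coincides with the paper's $\pi(i)=\pi'(A[S[i]]+rank_{S[i]}(S,i))$ under the identity $\pi'(A[j]+k)=select_j(S',k)$.
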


\begin{proof}
  Partition the permutation $\pi$ into $\nSUS$ shuffled 
  upsequences using Lemma~\ref{lem:partitionInSUS}, resulting in a
  string $S$ of length $n$ over alphabet $[\nSUS]$ which indicates for
  each element of the permutation $\pi$ the label of the upsequence it
  belongs to.
  Encode $S$ with a wavelet tree using Raman~\etal's compression for the 
  bitmaps, so as to achieve $nH(\SUS)+o(n\log\nSUS)$ bits of space and
  support retrieval of any $S[i]$, as well as symbol $rank$ and $select$ on 
  $S$, in time $\bigo(1+\log\nSUS)$ (\S\ref{sec:sequences}).
  Store also an array $A[1,\nSUS]$ so that $A[\ell]$ is the accumulated
  length of all the upsequences with label less than $\ell$. Array $A$ requires
  $\bigo(\nSUS\log n)$ bits.
  Finally, consider the permutation $\pi'$ formed by the upsequences
  taken in label order: $\pi'$ has at most $\nSUS$ runs and hence
  can be encoded using $n(2+H(\SUS))(1+o(1))+\bigo(\nSUS\log n)$ bits
  using Thm.~\ref{thm:main}, as $\SUS$ in $\pi$ corresponds to $\Runs$ in
  $\pi'$. This supports $\pi'(i)$ and $\pi'^{-1}(i)$ in time 
  $\bigo(1+\log\nSUS)$.
  
  Now $\pi(i) = \pi'(A[S[i]]+rank_{S[i]}(S,i))$ can be computed in time
  $\bigo(1+\log\nSUS)$. Similarly, 
  $\pi^{-1}(i) = select_\ell(S,(\pi')^{-1}(i)-A[\ell])$, where $\ell$ is
  such that $A[\ell] < (\pi')^{-1}(i) \le A[\ell+1]$, can also be computed in
  $\bigo(1+\log\nSUS)$ time.
  Thus the whole structure uses $2n(1+H(\SUS)) + o(n\log \nSUS) +
  \bigo(\nSUS\log n)$ bits and
  supports $\pi(i)$ and $\pi^{-1}(i)$ in time $\bigo(1+\log\nSUS)$.  

  The obstacles to achieve the claimed average time are the operations on the
  wavelet tree of $S$, and the binary search in $A$. The former can be reduced
  to $\bigo(1+\frac{\log\nSUS}{\log\log n})$ by using the improved wavelet
  tree representation by Ferragina \etal~(\S\ref{sec:sequences}). 
  The latter is reduced
  to constant time by representing $A$ with a bitmap $A'[1,n]$ with the bits 
  set at the values $A[\ell]+1$, so that $A[\ell] = select_1(A',\ell)-1$, and
  the binary search is replaced by $\ell = rank_1(A',(\pi')^{-1}(i))$. With 
  Raman \etal's structure (\S\ref{sec:sequences}), $A'$ needs 
  $\bigo(\nSUS\log \frac{n}{\nSUS})$ bits and operates in constant time.
\end{proof}

Again, we might prefer a simplified result when $\SUS$ has no interesting
distribution, and we also achieve an improved result on sorting, better than
the known $\bigo(n(1+\log\nSUS))$.

\begin{corollary} \label{cor:SUSbal} 
  There is an encoding scheme using at most $2n\lg\nSUS(1+o(1)) +
  \nSUS\lg\frac{n}{\nSUS} + \bigo(\nSUS)$ bits to encode a permutation
  $\pi$ over $[n]$ covered by $\nSUS$ shuffled upsequences.
  It supports the operations $\pi(i)$ and $\pi^{-1}(i)$ in time
  $\bigo(1+\log\nSUS)$ for any value of $i\in[n]$.
\end{corollary}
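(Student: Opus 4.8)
The plan is to obtain Corollary~\ref{cor:SUSbal} from Theorem~\ref{thm:SUS} in exactly the same way that Corollaries~\ref{cor:mainbal} and~\ref{cor:strictbal} were obtained from their parent theorems: by replacing the entropy term $H(\SUS)$ with its trivial upper bound $\lg\nSUS$ (Definition~\ref{def:entrop}), so that the favorable-distribution refinement disappears and only the coarse bound on the number $\nSUS$ of shuffled upsequences remains. Concretely, in the construction of Theorem~\ref{thm:SUS} the string $S$ over alphabet $[\nSUS]$ is encoded with a wavelet tree; if we represent its bitmaps in \emph{plain} form rather than with Raman~\etal's compression, the space becomes $n\lceil\lg\nSUS\rceil(1+o(1))$ instead of $nH(\SUS)+o(n\log\nSUS)$, while $rank$, $select$ and access still run in $\bigo(1+\log\nSUS)$ time (\S\ref{sec:sequences}).

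Next I would bound the contribution of the permutation $\pi'$: it has at most $\nSUS$ runs, so Corollary~\ref{cor:mainbal} (rather than Theorem~\ref{thm:main}) encodes it in $n\lceil\lg\nSUS\rceil(1+o(1)) + \bigo(\log n)$ bits, again with $\pi'(i)$ and $(\pi')^{-1}(i)$ supported in $\bigo(1+\log\nSUS)$ time. Adding the two $n\lceil\lg\nSUS\rceil(1+o(1))$ terms gives the leading $2n\lg\nSUS(1+o(1))$ of the statement. For the array $A$ I would keep the bitmap representation $A'[1,n]$ already used in the proof of Theorem~\ref{thm:SUS}: with Raman~\etal's structure it occupies $\bigo(\nSUS\log\frac{n}{\nSUS})$ bits, which I would absorb into the claimed $\nSUS\lg\frac{n}{\nSUS} + \bigo(\nSUS)$ term (here one should note $\bigo(\nSUS\log\frac{n}{\nSUS})$ is within a constant of $\nSUS\lg\frac{n}{\nSUS}+\bigo(\nSUS)$, or simply re-derive the constant directly). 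Finally the $\bigo(\log n)$ pointer term from Corollary~\ref{cor:mainbal} is dominated and can be folded into the $o(n\lg\nSUS)$ slack, or swept into $\bigo(\nSUS)$.

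The operations $\pi(i) = \pi'(A[S[i]]+rank_{S[i]}(S,i))$ and $\pi^{-1}(i) = select_\ell(S,(\pi')^{-1}(i)-A[\ell])$ with $\ell = rank_1(A',(\pi')^{-1}(i))$ are computed verbatim as in Theorem~\ref{thm:SUS}; each composes a constant number of $\bigo(1+\log\nSUS)$-time wavelet-tree and bitmap operations with one call to the $\pi'$ machinery, for a total of $\bigo(1+\log\nSUS)$. Since the average-time claim is dropped in the corollary, there is no need to invoke the Ferragina~\etal~multiary wavelet tree or to track the $\frac{\log\nSUS}{\log\log n}$ refinement — plain wavelet trees suffice throughout, which is what makes the statement genuinely simpler.

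Honestly there is no real obstacle here: the only thing requiring a moment's care is bookkeeping the lower-order terms so that the bitmaps $R^{inv}$-style structures and the pointer overhead land inside $\nSUS\lg\frac{n}{\nSUS}+\bigo(\nSUS)$ and the $(1+o(1))$ factors, rather than producing a stray additive $o(n)$ or $\bigo(\log n)$; since the corollary's bound as stated already has $\bigo(\nSUS)$ and the multiplicative $(1+o(1))$ on the dominant term, everything fits. I would therefore present the proof as two or three sentences: invoke Definition~\ref{def:entrop} to replace $H(\SUS)$ by $\lg\nSUS$ in Theorem~\ref{thm:SUS} (equivalently use plain wavelet trees and Corollary~\ref{cor:mainbal} in place of Theorem~\ref{thm:main}), keep the bitmap $A'$ for $A$, and observe the operation times are unchanged.
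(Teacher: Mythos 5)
Your derivation is correct and is exactly the route the paper intends (the corollary is stated there without explicit proof, as the same kind of simplification used for Corollaries~\ref{cor:mainbal} and~\ref{cor:strictbal}): plain wavelet-tree bitmaps for $S$, Corollary~\ref{cor:mainbal} for $\pi'$, and the bitmap $A'$ for $A$, with the operations inherited verbatim from Theorem~\ref{thm:SUS}. The only residual looseness --- $\lceil\lg\nSUS\rceil$ versus $\lg\nSUS$ and the stray $o(n)+\bigo(\log n)$ lower-order terms --- is present in the paper's own statement as well, so nothing further is needed.
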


\begin{corollary}
  \label{cor:sortSUS} 
  We can sort an array of length $n$, covered by $\nSUS$ shuffled
  upsequences of lenghts $\SUS$, in time $\bigo(n(1+H(\SUS)))$, which
  is worst-case optimal, in the comparison model, among all
  permutations decomposable into $\nSUS$ shuffled upsequences of lenghts
  $\SUS$ such that $\nSUS\log n = o(nH(\SUS))$.
\end{corollary}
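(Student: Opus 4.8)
The plan is to derive this corollary directly from Lemma~\ref{lem:partitionInSUS} together with the standard information-theoretic lower bound for comparison sorting restricted to a class of permutations. The upper bound is essentially immediate: given an array of length $n$, run the algorithm of Lemma~\ref{lem:partitionInSUS} to obtain, in time $\bigo(n(1+H(\SUS)))$, a decomposition into $\nSUS$ increasing subsequences of lengths $\SUS$. Each element is now tagged with the label of the upsequence it belongs to, so it remains only to merge these $\nSUS$ sorted sequences into one. I would do this with an $\nSUS$-way merge driven by a priority queue (say a splay tree or any self-adjusting heap) keyed on the current front element of each sequence; by the same accounting as in the lemma, the sequence of length $n_i$ contributes $\bigo(n_i(1+\log\nSUS))$ to the total, but using the working-set / static-optimality property of the splay tree one gets the sharper bound $\bigo(n(1+H(\SUS)))$ on the merging cost. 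Adding the two phases gives total time $\bigo(n(1+H(\SUS)))$.

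For the matching lower bound I would count permutations. Fix the target values $\nSUS$ and $\SUS=\langle n_1,\ldots,n_{\nSUS}\rangle$ with $\sum n_i = n$. The number of distinct permutations of $[n]$ that can be decomposed into $\nSUS$ increasing subsequences of exactly these lengths is at least the multinomial coefficient $\binom{n}{n_1,n_2,\ldots,n_{\nSUS}}$: choosing which of the $n$ positions carry values from which upsequence, and then filling each chosen set of positions with the unique increasing arrangement of the corresponding values, yields that many distinct permutations, all in the class (the argument is the same as the one identifying $\SUS$ in $\pi$ with $\Runs$ in $\pi'$ used in Thm.~\ref{thm:SUS}). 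Hence any comparison-based sorting algorithm, whose execution on this class is a decision tree, must have at least $\binom{n}{n_1,\ldots,n_{\nSUS}}$ leaves, so its worst-case number of comparisons is at least $\lg\binom{n}{n_1,\ldots,n_{\nSUS}} = nH(\SUS) - \bigo(\nSUS\log n)$ by the standard Stirling estimate of the multinomial. Under the stated hypothesis $\nSUS\log n = o(nH(\SUS))$, the correction term is negligible and the lower bound becomes $\Omega(nH(\SUS)) = \Omega(n(1+H(\SUS)))$, matching the upper bound.

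I would then note that the two bounds together establish worst-case optimality within the class: no algorithm can beat $\Omega(n(1+H(\SUS)))$ on the worst permutation of the class, and our algorithm achieves $\bigo(n(1+H(\SUS)))$ on every permutation of the class. One subtlety worth spelling out is the linear term: the $\Omega(n)$ part of the lower bound is the trivial ``every element must be inspected'' bound and holds regardless of entropy, which is why the optimality statement is phrased with $1+H(\SUS)$ rather than $H(\SUS)$ alone; the side condition $\nSUS\log n = o(nH(\SUS))$ is exactly what is needed to absorb the $\bigo(\nSUS\log n)$ slack in the multinomial estimate into the main term.

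The part that needs the most care is the upper bound on the merging phase, i.e.\ arguing that the $\nSUS$-way merge runs in $\bigo(n(1+H(\SUS)))$ and not merely $\bigo(n(1+\log\nSUS))$. The cleanest route is to reuse precisely the wavelet-tree merging argument of Thm.~\ref{thm:main}: the permutation $\pi'$ obtained by listing the upsequences in label order has $\nSUS$ runs of lengths $\SUS$, so Corollary~\ref{cor:mainsort} (applied with $\Runs = \SUS$) already gives that $\pi'$ can be sorted in $\bigo(n(1+H(\SUS)))$ time; composing with the $\bigo(n(1+H(\SUS)))$ partitioning step of Lemma~\ref{lem:partitionInSUS} immediately yields the claimed bound and sidesteps any separate analysis of a priority-queue-based merge. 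This also keeps the proof short, since all the heavy lifting has already been done in the earlier results.
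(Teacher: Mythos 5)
Your proof is correct and follows the intended route: partition with Lemma~\ref{lem:partitionInSUS} in $\bigo(n(1+H(\SUS)))$ time, sort the resulting concatenation $\pi'$ (whose runs have lengths $\SUS$) via the wavelet-tree merge of Corollary~\ref{cor:mainsort}, and match this against the information-theoretic lower bound $\lg\bigl(n!/(n_1!\cdots n_{\nSUS}!)\bigr) = nH(\SUS)-\bigo(\nSUS\log n)$, which the side condition $\nSUS\log n = o(nH(\SUS))$ turns into $\Omega(n(1+H(\SUS)))$. The only step worth tightening is the counting argument: fix the value assignment explicitly (say, upsequence $i$ receives the $i$-th block of $n_i$ consecutive values) so that distinct position-partitions demonstrably yield distinct permutations of the class, rather than leaving ``the corresponding values'' implicit; you are also right to prefer the wavelet-tree merge over the priority-queue merge, whose static-optimality analysis is delicate because the keys change during the merge.
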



\section{Applications}
\label{sec:applications}

\subsection{Inverted Indexes}
\label{sec:invfiles}

Consider a full-text inverted index which gives the word positions of any 
word in a text. This is a popular data structure for natural language text
retrieval \cite{BYRN99,WMB99}, as it permits for example solving phrase queries
without accessing the text. For each different text word, an increasing list of
its text positions is stored.

Let $n$ be the total number of words in a text collection $T[1,n]$ and $\nRuns$
the vocabulary size (i.e., number of different words). An uncompressed inverted 
index requires $(\nRuns+n)\lceil\lg n\rceil$ bits. It has been shown 
\cite{MN07} that, by $\delta$-encoding the differences between consecutive 
entries in the inverted lists, the total space reduces to $nH_0(T) +
\nRuns\lceil\lg n\rceil$, where $H_0(T)$ is the zero-order entropy of the text
if seen as a sequence of words (\S\ref{sec:sequences}). We note that 
the empirical law by Heaps \cite{Hea78}, well accepted in Information Retrieval,
establishes that $\nRuns$ is small: $\nRuns = \bigo(n^\beta)$ for some constant
$0<\beta<1$ depending on the text type.

Several successful methods to compress natural language text take words as
symbols and use zero-order encoding, and thus the size they can achieve is
lower bounded by $nH_0(T)$ \cite{MNZBY00}. If we add the differentially
encoded inverted index in order to be able of searching the compressed text, 
the total space is at least $2nH_0(T)$. 

Now, the concatenation of the $\nRuns$ inverted lists can be 
seen as a permutation of $[n]$ with $\nRuns$ runs, and therefore 
Thm.~\ref{thm:main} lets us encode it in 
$n(2+H_0(T))(1+o(1)) + \bigo(\nRuns\log n)$ bits. Within the same space we can 
add $\nRuns$ numbers telling where the runs begin, in an array $V[1,\nRuns]$. 
Now, in order to retrieve the list of the $i$-th word, we simply obtain
$\pi(V[i]), \pi(V[i]+1), \ldots, \pi(V[i+1]-1)$, each in $\bigo(1+\log\nRuns)$ 
time. Moreover we can extract any random position from a list, which enables
binary-search-based strategies for list intersection \cite{BY04,ST07,CM07}.
In addition, we can also obtain a text passage from the (inverse) permutation: 
To find out $T[j]$, $\pi^{-1}(j)$ gives its position in the inverted lists, and
a binary search on $V$ finds the interval $V[i] \le \pi^{-1}(j) < V[i+1]$, to
output that $T[j] = i$th word, in $\bigo(1+\log \nRuns)$ time.

This result is very interesting, as it constitutes a true word-based 
{\em self-index} \cite{NM07} (i.e., a compressed text index that contains 
the text). Similar results have been recently obtained with rather different 
methods \cite{BFLN08,CN08}. The cleanest one is to
build a wavelet tree over $T$ with compression \cite{FMMN07}, which achieves 
$nH_0(T)+ o(n\log\nRuns)+\bigo(\nRuns\log n)$ bits of space, and permits 
obtaining $T[i]$, as well as extracting the $j$th element of the inverted list 
of the $i$th word with $select_i(T,j)$, all in time
$\bigo(1+\frac{\log\nRuns}{\log\log n})$. 

Yet, one advantage of our approach is that the extraction of $\ell$ consecutive
entries $\pi^{-1}([i,i'])$ takes $\bigo(\ell(1+\log\frac{\nRuns}{\ell}))$ time 
if we do the process for all the entries as a block: 
Start at range $[i,i']$ at the root bitmap 
$B$, with position $p \leftarrow 0$, and bitmap size $s \leftarrow n$. Go down
to both left and right children: to the left with $[i,i'] \leftarrow
[rank_0(B,i),rank_0(B,i')]$, same $p$, and $s \leftarrow rank_0(B,s)$; to the
right with $[i,i'] \leftarrow [rank_1(B,i),rank_1(B,i')]$, 
$p \leftarrow p + rank_0(B,s)$, and $s \leftarrow rank_1(B,s)$. Stop when the
range $[i,i']$ becomes empty or when we reach a leaf, in which case report all
answers $p+k$, $i \le k \le i'$. By representing the inverted list as
$\pi^{-1}$, we can extract long inverted lists faster than the existing
methods.

\begin{corollary} \label{cor:invfile}
There exists a representation for a text $T[1,n]$ of integers in $[1,\nRuns]$
(regarded as word identifiers), with zero-order entropy $H_0$, that takes 
$n(2+H_0)(1+o(1)) + \bigo(\nRuns\log n)$ bits of space, and can retrieve the 
text position of the $j$th occurrence of the $i$th text word, as well as the 
value $T[j]$, in $\bigo(1+\log \nRuns)$ time. It can also retrieve any range of
$\ell$ successive occurrences of the $i$th text word in time $\bigo(\ell (1+\log
\frac{\nRuns}{\ell}))$.
\end{corollary}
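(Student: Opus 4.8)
The plan is to obtain everything from Theorem~\ref{thm:main}. First I would build the permutation $\pi$ of $[n]$ that concatenates the $\nRuns$ inverted lists in word order: for $i=1,\ldots,\nRuns$ in turn, list the text positions of the $i$th word in increasing order. Each list being increasing, the down steps of $\pi$ can occur only at the $\nRuns-1$ junctions between consecutive lists, so $\pi$ is covered by at most $\nRuns$ runs whose lengths are the occurrence counts $\langle n_1,\ldots,n_\nRuns\rangle$ of the words (merged further when a junction is not a down step, which only helps). By Definition~\ref{def:entrop}, $H(\Runs)\le H(\langle n_1,\ldots,n_\nRuns\rangle)=H_0$, so Theorem~\ref{thm:main} applied to $\pi$ yields an encoding of at most $n(2+H_0)(1+o(1))+\bigo(\nRuns\log n)$ bits answering $\pi(i)$ and $\pi^{-1}(i)$ in $\bigo(1+\log\nRuns)$ time. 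Alongside it I would keep the $\nRuns$ list-start positions in a plain array $V[1,\nRuns]$, $\bigo(\nRuns\log n)$ further bits, keeping the total within the claimed bound.

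The three operations are then short. The text position of the $j$th occurrence of the $i$th word is $\pi(V[i]+j-1)$, one evaluation of $\pi$, hence $\bigo(1+\log\nRuns)$. For $T[j]$: the rank $\pi^{-1}(j)$ locates $j$ inside the concatenation, and a binary search for it in $V$ returns the index of the list it falls in, i.e.\ $T[j]$, adding $\bigo(\log\nRuns)$; total $\bigo(1+\log\nRuns)$. For a range of $\ell$ successive occurrences of the $i$th word --- these are $\ell$ consecutive values $\pi(a),\pi(a+1),\ldots,\pi(a+\ell-1)$ with $[a,a+\ell-1]$ lying inside a single run of $\pi$ --- I would reuse the batched top-down traversal of the wavelet tree described just before the corollary, processing the whole position interval as a block rather than entry by entry (choosing, as the remark there indicates, the orientation of the encoding so that the desired segment is exactly what that traversal emits over a contiguous interval of the root bitmap).

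The cost of that batched traversal is the point needing genuine argument. As the query interval descends, it is split among the children of each visited node; at depth $h$ it has been broken into at most $\min(\ell,2^{h})$ contiguous pieces, each handled with $\bigo(1)$ rank operations, so the total is $\sum_{h=0}^{\bigo(\log\nRuns)}\min(\ell,2^{h})=\bigo\!\left(\ell+\ell\log\frac{\nRuns}{\ell}\right)$, using that the Hu--Tucker rebalancing inside Theorem~\ref{thm:main} keeps the depth $\bigo(\log\nRuns)$ so the part of the sum past depth $\lg\ell$ contributes only $\bigo(\ell\log\frac{\nRuns}{\ell})$. Checking this output-sensitive accounting --- together with the supporting facts that a contiguous interval stays contiguous inside each child of a wavelet-tree node and that, by property~(1) of the Hu--Tucker codes in Theorem~\ref{thm:main}, the entries of a single list form one contiguous interval of the relevant bitmap --- is, I expect, the only real obstacle; the rest is reindexing through $V$.
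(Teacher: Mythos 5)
Your construction and the first two operations are exactly the paper's: the concatenation of the $\nRuns$ inverted lists is a permutation with at most $\nRuns$ runs whose lengths are the word frequencies, Theorem~\ref{thm:main} gives the space bound via $H(\Runs)\le H_0$, the $j$th occurrence of word $i$ is $\pi(V[i]+j-1)$, and $T[j]$ comes from $\pi^{-1}(j)$ plus a search in $V$. Your accounting for the batched top-down traversal ($\sum_h\min(\ell,2^h)$ over depth $\bigo(\log\nRuns)$, giving $\bigo(\ell(1+\log\frac{\nRuns}{\ell}))$) is also the intended one and is correct for that traversal.

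The step you single out as the obstacle is indeed the delicate one, and the ``supporting fact'' you propose for it is false as stated. Property~(1) makes each list a contiguous interval of \emph{leaf} positions (a contiguous range of the domain of $\pi$), not of the root bitmap: in the structure built on the concatenation of lists, the root bitmap is indexed by the \emph{values} of $\pi$, i.e.\ by text positions, and the entries of one list are exactly the root positions whose downward path ends at that list's leaf --- they are scattered, not contiguous. Consequently the batched top-down traversal, applied to this structure, extracts $\pi^{-1}$ of a contiguous range of text positions (a text passage), while extracting $\ell$ entries of a single list is a batched select, costing $\bigo(\ell)$ per level bottom-up, i.e.\ $\bigo(\ell\log\nRuns)$. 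The reorientation you and the paper both invoke (``representing the inverted list as $\pi^{-1}$'') concretely means storing the structure for the permutation that maps text positions to list positions, so that each list becomes a contiguous range of \emph{values}; but the runs of that permutation are delimited by the descents of $T$, not by the $\nRuns$ lists, so the appeal to Theorem~\ref{thm:main} with parameter $\nRuns$ and entropy $H_0$ no longer goes through verbatim. This is the one point where the corollary needs an argument beyond what you (or the two paragraphs preceding the corollary in the paper) provide, and your proof should either supply it or restrict the batched claim to the orientation for which it actually holds.
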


We could, instead, represent the inverted list as $\pi$, so as to
extract long text passages efficiently, but the wavelet tree
representation can achieve the same result. Another interesting
functionality that both representations share, and which is useful for
other list intersection algorithms
\cite{fasterAdaptiveSetIntersectionsForTextSearching,adaptiveSearchingInSuccinctlyEncodedBinaryRelationsAndTreeStructuredDocumentsTCS},
is that to obtain the first entry of a list which is larger than
$x$. This is done with $rank$ and $select$ on the wavelet tree
representation. In our permutation representation, we can also achieve
it in $\bigo(1+\log\nRuns)$ time by finding out the position of a
number $x$ within a given run. The algorithm is similar to those in
Thm.~\ref{thm:main} that descend to a leaf while maintaining the
offset within the node, except that the decision on whether to descend
left or right depends on the leaf we want to arrive at and not on the
bitmap content (this is actually the algorithm to compute $rank$ on
binary wavelet trees \cite{NM07}).

Finally, we note that our inverted index data structure supports in
small time all the operations required to solve conjunctive queries on
binary relations.

\subsection{Suffix Arrays}
\label{sec:suffix-arrays}

Suffix arrays are used to index texts that cannot be handled with inverted
lists. Given a text $T[1,n]$ of $n$ symbols over an alphabet of size $\nRuns$,
the {\em suffix} array $A[1,n]$ is a permutation of $[n]$ so that $T[A[i],n]$
is lexicographically smaller than $T[A[i+1],n]$. As suffix arrays take much 
space, several compressed data structures have been developed for them 
\cite{NM07}. One of interest for us is the {\em Compressed Suffix Array (CSA)}
of Sadakane \cite{Sad03}. It builds over a permutation $\Psi$ of $[n]$, which
satisfies $A[\Psi[i]] = (A[i]~\textrm{mod}~n) +1$ (and thus lets us move
virtually one position forward in the text) \cite{GV06}. It turns out that, 
using
just $\Psi$ and $\bigo(\nRuns\log n)$ extra bits, one can $(i)$ {\em count} the
number of times a pattern $P[1,m]$ occurs in $T$ using $\bigo(m\log n)$
applications of $\Psi$; $(ii)$ {\em locate} any such occurrence using 
$\bigo(s)$ applications of $\Psi$, by spending $\bigo(\frac{n\log n}{s})$
extra bits of space; and $(iii)$ {\em extract} a text substring $T[l,r]$ using
at most $s+r-l$ applications of $\Psi$. Hence this is another self-index, and
its main burden of space is that to represent permutation $\Psi$.

Sadakane shows that $\Psi$ has at most $\nRuns$ runs, and gives a 
representation that accesses $\Psi[i]$ in constant time by using
$nH_0(T) + \bigo(n\log\log\nRuns)$ bits of space. It was shown later \cite{NM07}
that the space is actually $nH_k(T) + \bigo(n\log\log\nRuns)$ bits, for any 
$k \le \alpha \log_\nRuns n$ and constant $0<\alpha<1$. Here $H_k(T) \le
H_0(T)$ is the $k$th order empirical entropy of $T$ \cite{Man01}.

With Thm.~\ref{thm:main} we can encode $\Psi$ using $n(2+H_0(T))(1+o(1)) + 
\bigo(\nRuns\log n)$ bits of space, whose extra terms aside from entropy
are better than Sadakane's. Those extra terms can be very significant in
practice. The price is that the time to access $\Psi$ is $\bigo(1+\log \nRuns)$
instead of constant. On the other hand, an interesting extra functionality is
that to compute $\Psi^{-1}$, which lets us move (virtually) one position
backward in $T$. This allows, for example, displaying the text context around
an occurrence without having to spend any extra space. Still, although
interesting, the result is not competitive with recent developments 
\cite{FMMN07,MNspire07}.

An interesting point is that $\Psi$ contains $\nSRuns \le 
\min(n,nH_k(T)+\nRuns^k)$ strict runs, for any $k$ \cite{MN05}. Therefore, 
Cor.~\ref{cor:strictbal} lets us represent it using
$\nSRuns\lceil\lg\nRuns\rceil (1+o(1)) + 2\nSRuns\lg\frac{n}{\nSRuns} + 
\bigo(\nSRuns) + o(n)$ bits of space. For $k$ limited as above, this is 
at most $nH_k(T) (\lg\nRuns + 2\lg\frac{1}{H_k(T)} + \bigo(1))+o(n\log\nRuns)$
bits, which is similar to the space achieved by another self-index 
\cite{MN05,MNSV08}, yet again it is slightly superseded by its time 
performance.

\subsection{Iterated Permutation}
\label{sec:iterated-permutation}

Munro~\etal~\cite{Munro03} described how to represent a permutation
$\pi$ as the concatenation of its cycles, completed by a bitvector of
$n$ bits coding the lengths of the cycles.
As the cycle representation is itself a permutation of $[n]$, we can use any
of the permutation encodings described in
\S\ref{sec:compr-techn} to encode it, adding the
binary vector encoding the lengths of the cycles.
It is important to note that, for a specific permutation $\pi$, the
difficulty to compress its cycle encoding $\pi'$ is not the same as
the difficulty to encode the original permutation $\pi$.

Given a permutation $\pi$ with $c$ cycles of lengths
$\langle n_1,\ldots,n_c\rangle$, there are several ways to encode it as a
permutation $\pi'$, depending on the starting point of each cycle
($\Pi_{i\in[c]} n_i$ choices) and the order of the cycles in the encoding ($c!$
choices).
As a consequence, each permutation $\pi$ with $c$ cycles of lengths
$\langle n_1,\ldots,n_c\rangle$ can be encoded by any of the 
$\Pi_{i\in[c]} i\times n_i$ corresponding permutations.

\begin{corollary}
  Any of the encodings from Theorems~\ref{thm:main}, \ref{thm:strict}
  and~\ref{thm:SUS} can be combined with an additional cost of at most
  $n+o(n)$ bits to encode a permutation $\pi$ over $[n]$ composed of
  $c$ cycles of lengths $\langle n_1,\ldots,n_c\rangle$ to support the
  operation $\pi^{k}(i)$ for any value of $k\in\mathbb{Z}$, in time
  and space function of the order in the permutation encoding of the
  cycles of $\pi$.
\end{corollary}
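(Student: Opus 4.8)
The plan is to encode the permutation $\pi$ by the permutation $\pi'$ formed by concatenating its cycles, exactly as in Munro \etal~\cite{Munro03}, together with a bitvector $B[1,n]$ that marks, for each position of $\pi'$, whether it is the start of a new cycle; $B$ has $c$ bits set, so it can be stored in $n+o(n)$ bits with constant-time $rank$ and $select$ using the structure of \S\ref{sec:sequences}. The key observation is that $\pi'$ is itself a permutation of $[n]$, and hence \emph{any} of Theorems~\ref{thm:main}, \ref{thm:strict} or~\ref{thm:SUS} applies to it; whichever of its measures ($\nRuns$, $\nSRuns$, $\nSUS$, and the corresponding length distributions) happens to be favorable for the particular $\pi'$ chosen among the $\Pi_{i\in[c]} i\times n_i$ possibilities governs the size and the access time. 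This gives the ``function of the order in the permutation encoding of the cycles'' qualifier in the statement.

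Next I would describe how to support $\pi^k(i)$ for arbitrary $k\in\mathbb{Z}$ using only $\pi'$, $(\pi')^{-1}$, and $rank/select$ on $B$, all of which are available in the claimed time. Given $i$, first locate the cycle containing $i$: we need the position of $i$ inside $\pi'$, which is $(\pi')^{-1}(i)$; then $r \leftarrow rank_1(B,(\pi')^{-1}(i))$ identifies the cycle, $s \leftarrow select_1(B,r)$ its starting position in $\pi'$, and $e \leftarrow select_1(B,r+1)-1$ its ending position (with the convention $select_1(B,c+1)=n+1$), so the cycle length is $\ell = e-s+1$. Moving one step along the cycle is just advancing one position cyclically within the block $[s,e]$ of $\pi'$: if $p$ is the current position in $\pi'$, the next position is $s + ((p-s+1) \bmod \ell)$, and the element stored there is obtained by one application of $\pi'$ (or, symmetrically, stepping backward uses one application of $(\pi')^{-1}$). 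Therefore to compute $\pi^k(i)$ we reduce $k$ modulo $\ell$ to $k' \in [0,\ell)$, find the position $p_0 = (\pi')^{-1}(i)$, compute the target position $p' = s + ((p_0 - s + k') \bmod \ell)$, and return $\pi'(p')$ if $p'\neq p_0$ and $i$ otherwise; negative $k$ is handled by the same modular reduction. This costs a constant number of calls to $\pi'$, $(\pi')^{-1}$, and $rank/select$ on $B$, so the time is dominated by the cost of the underlying permutation representation, and the only space added on top of that representation is the $n+o(n)$ bits of $B$.

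The step I expect to require the most care is the bookkeeping that translates ``position within the cycle block of $\pi'$'' into ``element of $[n]$'' and back, making sure the conventions are consistent: $\pi'$ as stored lists, for each position $p$ in a cycle block, the \emph{successor} of the element at position $p$, so one must be careful whether ``the element at position $p$'' means the value written at $p$ or the value whose successor is written at $p$; fixing this once (e.g.\ by declaring that position $p$ of a block \emph{holds} the element $\pi'(p)$ and its predecessor in the cycle is $\pi'(p-1)$ within the block, cyclically) makes the formula for $\pi^k$ fall out immediately. A minor additional point is the edge cases $k\equiv 0$ and singleton cycles, both of which must return $i$; these are handled by the explicit test above. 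Everything else — the linear-time construction of $\pi'$ and $B$ from $\pi$, and the inheritance of the size and average-time bounds — is immediate from the cited theorems.
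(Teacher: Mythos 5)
Your proposal is correct and follows essentially the same route as the paper: represent $\pi$ by the concatenation $\pi'$ of its cycles as in Munro \etal, mark cycle boundaries with a bitvector of $n+o(n)$ bits supporting $rank$/$select$, encode $\pi'$ with any of the three theorems, and answer $\pi^k(i)$ by locating $i$'s position via $(\pi')^{-1}$, reducing $k$ modulo the cycle length, and applying $\pi'$ once. The explicit modular-arithmetic bookkeeping you supply is a correct elaboration of what the paper leaves implicit.
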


The space ``wasted'' by such a permutation representation of the
cycles of $\pi$ is $\sum\lg n_i + c\lg c$ bits.
To recover some of this space, one can define a canonical cycle
encoding by starting the encoding of each cycle with its smallest
value, and by ordering the cycles in order of their starting point.
This canonical encoding always starts with a $1$ and creates at least
one shuffled upsequence of length $c$: it can be compressed as a
permutation over $[n-1]$ with at least one shuffled upsequence of
length $c+1$ through Thm~\ref{thm:SUS}.

\subsection{Integer Functions}
\label{sec:integer-functions}

Munro and Rao~\cite{representingFunctions} extended the results on
permutations to arbitrary functions from $[n]$ to $[n]$, and to their
iterated application $f^k(i)$, the function iterated $k$ times
starting at $i$.
Their encoding is based on the decomposition of the function into a
bijective part, represented as a permutation, and an injective part,
represented as a forest of trees whose roots are elements of the
permutation: the summary of the concept is that an integer function is
just a ``hairy permutation''.
Combining the representation of permutations from \cite{Munro03} with
any representation of trees supporting the level-ancestor operator
and an iterator of the descendants at a given level yields a
representation of an integer function $f$ using $(1+\varepsilon)n\lg
n+\bigo(1)$ bits to support $f^k(i)$ in $\bigo(1+|f^k(i)|)$ time, for
any fixed $\varepsilon$, integer $k\in\mathbb{Z}$ and $i\in[n]$.

Janssen~\etal~\cite{ultraSuccinctRepresentationOfORderedTrees}
defined the \emph{degree entropy} of an ordered tree $T$ with $n$
nodes, having $n_i$ nodes with $i$ children, as
$H^*(T)=H(\langle n_1,n_2,\ldots\rangle)$, and proposed a succinct
data structure for $T$ using $nH^*(T)+\bigo(n(\lg\lg n)^2/\lg n)$ bits
to encode the tree and support, among others, the level-ancestor
operator.
Obviously, the definition and encoding can be generalized to a forest
of $k$ trees by simply adding one node whose $k$ children are the
roots of the $k$ trees.

Encoding the injective parts of the function using
Janssen~\etal's~\cite{ultraSuccinctRepresentationOfORderedTrees}
succinct encoding, and the bijective parts of the function using one
of our permutation encodings, yields a compressed representation of any
integer function which supports its application and the application of
its iterated variants in small time.

\begin{corollary}\label{cor:fun}
  There is a representation of a function $f:[n]\rightarrow[n]$ that
  uses $n (1+ \lceil\lg\nRuns\rceil + H^*(T)) +o(n\lg n) $ bits to
  support $f^k(i)$ in $\bigo(\log\nRuns+|f^k(i)|)$ time, for any
  integer $k$ and for any $i\in[n]$, where $T$ is the forest
  representing the injective part of the function, and $\nRuns$ is the
  number of runs in the bijective part of the function.
\end{corollary}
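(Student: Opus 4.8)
The plan is to follow the Munro--Rao decomposition of $f$ into a bijective core and an injective ``forest'' part, and then plug in our run-based permutation encoding (Corollary~\ref{cor:mainbal}) for the former and Janssen~\etal's degree-entropy tree encoding for the latter. First I would recall the structure: the elements of $[n]$ that lie on cycles of $f$ form a set on which $f$ acts as a permutation $\pi$; the remaining elements hang off the cyclic elements as a forest $T$ of ``trees of preimages'', where the children of a node $x$ are exactly the elements mapped to $x$ by $f$ that are not themselves cyclic. As the excerpt notes, we generalize Janssen~\etal's encoding to a forest by adding a dummy root whose children are the (ordered list of) cyclic elements; this single forest-with-dummy-root is one ordered tree on $n+1$ nodes, encodable in $nH^*(T)+\bigo(n(\lg\lg n)^2/\lg n)=nH^*(T)+o(n\lg n)$ bits while supporting level-ancestor and a descendants-at-given-level iterator.

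Next I would count the space. The permutation $\pi$ on the cyclic elements has some number $\nRuns$ of runs; by Corollary~\ref{cor:mainbal} it is encoded in $c\lceil\lg\nRuns\rceil(1+o(1))+\bigo(\log n)$ bits, where $c\le n$ is the number of cyclic elements, while supporting $\pi(i)$ and $\pi^{-1}(i)$ in $\bigo(1+\log\nRuns)$ time. Bounding $c\le n$ and absorbing lower-order terms, the three pieces --- the bitvector $\bigo(n)=o(n\lg n)$ bits coding cycle lengths as in \cite{Munro03}, the permutation $n\lceil\lg\nRuns\rceil(1+o(1))$ bits, and the forest $nH^*(T)+o(n\lg n)$ bits --- add up to $n(1+\lceil\lg\nRuns\rceil+H^*(T))+o(n\lg n)$ bits, as claimed. (I would also remark that the identification between a cyclic element's position in $\pi$ and its position among the dummy root's children is the identity by construction, so no extra mapping structure is needed.)

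Finally I would describe the computation of $f^k(i)$. Given $i$, use the forest structure to find, via repeated parent steps or a single level-ancestor query, the unique cyclic ancestor $a$ of $i$ and the depth $d$ of $i$ below $a$ (depth $0$ meaning $i$ is itself cyclic). If $k\ge d$, then $f^k(i)=\pi^{\,k-d}(a)$, which we compute by first applying $\pi$ (using the cycle-length bitvector to reduce $k-d$ modulo the length of $a$'s cycle, so that only $\bigo(1)$ applications of $\pi$ or $\pi^{-1}$ are needed, exactly as in \cite{Munro03}); this costs $\bigo(1+\log\nRuns)$. If $k<d$, then $f^k(i)$ is the ancestor of $i$ at depth $d-k$, one more level-ancestor query. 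For negative $k$, $f^{k}(i)$ is multivalued when $i$ has several preimages; as in the cited work we report the whole set $f^{k}(i)$ by descending $|k|$ levels from the appropriate node using the descendants-at-a-given-level iterator (going around the cycle with $\pi^{-1}$ as needed), in time $\bigo(1+|f^k(i)|)$ for the enumeration plus $\bigo(1+\log\nRuns)$ for the cyclic navigation, for a total of $\bigo(\log\nRuns+|f^k(i)|)$. The main obstacle is bookkeeping rather than any hard lemma: one must verify that the level-ancestor and level-descendant primitives of Janssen~\etal{} interact correctly with the cycle structure at the ``seam'' (depth $0$, the dummy root), and that all modular reductions of $k$ against cycle lengths take only constant extra time and $o(n\lg n)$ extra space; both follow the pattern already established in \cite{Munro03,representingFunctions}, with our permutation encoding substituted in as a black box via its stated $\pi(i)$/$\pi^{-1}(i)$ interface.
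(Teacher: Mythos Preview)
Your proposal is correct and follows exactly the approach the paper outlines in the paragraphs preceding the corollary: the Munro--Rao decomposition into a bijective core (represented via the cycle encoding of \S\ref{sec:iterated-permutation} with Corollary~\ref{cor:mainbal} applied to the cycle permutation) plus an injective forest (represented via Janssen~\etal's degree-entropy structure), with $f^k$ computed by combining level-ancestor/level-descendant queries and $\bigo(1)$ applications of $\pi/\pi^{-1}$ after modular reduction by cycle length. One cosmetic point: the explicit ``$1$'' in the space bound $n(1+\lceil\lg\nRuns\rceil+H^*(T))$ is precisely the $n+o(n)$ bits of the cycle-length bitvector from \S\ref{sec:iterated-permutation}, which you instead absorb into the $o(n\lg n)$ term; either accounting is valid.
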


\section{Conclusion}
\label{sec:conclusion}

Bentley and Yao~\cite{anAlmostOptimalAlgorithmForUnboundedSearching},
when introducing a family of search algorithms adaptive to the
position of the element searched (aka the ``unbounded search''
problem), did so through the definition of a family of adaptive codes
for unbounded integers, hence proving that the link between algorithms
and encodings was not limited to the complexity lower bounds suggested
by information theory.
%

In this paper, we have considered the relation between the difficulty
measures of adaptive sorting algorithms and some measures of
``entropy'' for compression techniques on permutations.
In particular, we have shown that some concepts originally defined for
adaptive sorting algorithms, such as runs and shuffled upsequences,
are useful in terms of the compression of permutations; and conversely,
that concepts originally defined for data compression, such as the
entropy of the sets of sizes of runs, are a useful addition to the
set of difficulty measures that one can consider in the study of
adaptive algorithms.

It is easy to generalize our results on runs and strict
runs to take advantage of permutations which are a mix of up and down
runs or strict runs
(e.g. $(\mathit{1},\mathit{3},\mathit{5},\mathit{7},\mathit{9},\mathbf{10},\mathbf{8},\mathbf{6},\mathbf{4},\mathbf{2})$,
with only a linear extra computational and/or space cost.
The generalization of our results on shuffled upsequences to 
SMS~\cite{sortingShuffledMonotoneSequences}, permutations
containing mixes of subsequences sorted in increasing and decreasing
orders
(e.g. $(\mathit{1},\mathbf{10},\mathit{2},\mathbf{9},\mathit{3},\mathbf{8},\mathit{4},\mathbf{7},\mathit{5},\mathbf{6})$)
is sligthly more problematic, because it is NP hard to optimally
decompose a permutation into such
subsequences~\cite{partitioningPermutationsIntoIncreasingAndDecreasingSubsequences},
but any approximation scheme~\cite{sortingShuffledMonotoneSequences} would yield a good encoding.

Refer to the associated technical
report~\cite{compressedRepresentationsOfPermutationsAndApplicationsTR}
for a longer version of this paper, in particular including all
the proofs.



\end{document}